\newtheorem{lemma}{Lemma}
\newtheorem{assumption}{Assumption}
\newtheorem{proposition}{Proposition}
\DeclareMathOperator{\E}{\mathbbmss{E}}
\DeclareMathOperator{\Trn}{\mathsf{T}}
\DeclareMathOperator{\Hrm}{\mathsf{H}}
\begin{document}
\bstctlcite{IEEEexample:BSTcontrol}
%
\title{Rate-Splitting for Max-Min Fair Multigroup Multicast Beamforming in Overloaded Systems}
\author{Hamdi~Joudeh and Bruno~Clerckx
\thanks{This work is partially supported by the UK EPSRC under grant EP/N015312/1.
A preliminary version of this paper was presented at the 17th  IEEE International workshop on Signal Processing  advances in Wireless Communications
(SPAWC), Edinburgh, UK, July 2016.}
\thanks{The authors are with the Communications and Signal Processing Group,
Department of Electrical and Electronic Engineering, Imperial College, London SW7 2AZ, U.K. (email: hamdi.joudeh10@imperial.ac.uk; b.clerckx@imperial.ac.uk).}
}
\maketitle

\begin{abstract}
In this paper, we consider the problem of achieving max-min fairness amongst multiple co-channel multicast groups through transmit beamforming.
We explicitly focus on overloaded scenarios in which the number of transmitting antennas is insufficient to neutralize all inter-group interference.
Such scenarios are becoming increasingly relevant in the light of growing low-latency content delivery demands, and also commonly appear in multibeam satellite systems.
We derive performance limits of classical beamforming strategies using DoF analysis unveiling their limitations; for example, rates saturate in overloaded scenarios due to inter-group interference.
To tackle interference, we propose a strategy based on degraded beamforming and successive interference cancellation.
While the degraded strategy resolves the rate-saturation issue, this comes at a price of sacrificing all spatial multiplexing gains.
This motivates the development of a unifying strategy that combines the benefits of the two previous strategies.
We propose a beamforming strategy based on rate-splitting (RS) which divides the messages intended to each group into a degraded part and a designated part, and transmits a superposition of both degraded and designated beamformed streams.
The superiority of the proposed strategy is demonstrated through DoF analysis.
Finally, we solve the RS beamforming design problem and demonstrate significant performance gains through simulations.
\end{abstract}
%
\IEEEpeerreviewmaketitle
\section{Introduction}
\label{Section_Introduction}
Physical layer multicasting in wireless networks has received considerable research attention in recent years.
In the simplest multicasting scenario, a transmitter communicates a common message to a group of receivers \cite{Sidiropoulos2006}.
More complex scenarios involve the simultaneous transmission of distinct messages to multiple multicast groups, known as multigroup multicasting \cite{Karipidis2008}.
Such scenarios are likely to occur in future wireless networks due to the emergence of content-oriented services and wireless caching.
Indeed, multicast groups are naturally (or artificially!) formed by users requesting similar content (or coded content), opening the door for an increased role of physical layer multicasting solutions \cite{Tao2016,Maddah-Ali2016}.
Multicasting scenarios also appear in multibeam satellite systems due to standardized framing structures, where each data stream 
accommodates the requests of multiple users  \cite{Christopoulos2013,Christopoulos2015a,Vazquez2016,Joroughi2017}.

Despite the fact that simple isotropic transmission is sufficient in the single-group multicasting case, the employment of multiantenna signal processing techniques was shown to achieve nontrivial performance gains \cite{Sidiropoulos2006,Jindal2006a}.
Amongst the various multiantenna signalling solutions, the most popular ones are those with unity-rank input covariance matrices (beamforming).
Such solutions allow the use of the well-established scalar codes, originally designed for single-antenna systems, hence greatly simplifying the channel coding problem \cite{Jafar2004,Tse2005}.
Although the single-group multicasting setup may be considered relatively simple, the problem of finding the optimum beamforming direction was in fact shown to be NP-hard \cite{Sidiropoulos2006}.
The authors of \cite{Sidiropoulos2006} proposed an approximate method using Semidefinite Relaxation (SDR) and  randomization.
Moving towards multigroup multicasting, beamforming becomes more crucial as such scenarios are fundamentally limited by inter-group interference.
In general, multigroup multicast beamforming design problems inherit the difficulty of single-group problems, while posing the additional challenge of managing inter-group interference.
Different beamforming designs yield different tradeoffs between the rates that can be simultaneously supported for different multicast groups.
In \cite{Karipidis2008}, the SDR-randomization approach of  \cite{Sidiropoulos2006} is extended to solve two multigroup multicast beamforming problems: the Quality of Service (QoS) constrained power minimization problem and the power constrained Max-Min Fair (MMF) problem.
Alternative solutions based on convex approximation methods were later proposed, exhibiting marginally improved performances in certain scenarios, yet lower complexities \cite{Bornhorst2011,Schad2012}.
Moreover, multigroup multicast beamforming problems were also extended to many other scenarios including
per-antenna power constrained transmission \cite{Christopoulos2014}, large-scale antenna arrays \cite{Christopoulos2015}, multi-cell coordination \cite{Hong2015}, relay networks \cite{Bornhorst2012} and cache aided networks \cite{Tao2016}, just to name a few.
\emph{Motivation}: All aforementioned works adopt a classical beamforming framework in which one designated data stream is beamed to each group,
and receivers decode their corresponding messages while treating all other data streams as noise.
While this strategy can neutralize inter-group interference under a sufficient number of transmitting antennas,
it fails to do so in overloaded scenarios with a relatively high number of co-scheduled groups/users \cite{Joudeh2016c}.
This can be roughly attributed to the number of data streams exceeding the number of spatial Degrees of Freedom (DoF) created through classical beamforming.
Overloaded scenarios are becoming more and more relevant in the light of the growing demands for ultra-low latency and ultra-high connectivity \cite{Dai2015a,Piovano2016}.
Moreover, such scenarios commonly arise in multibeam satellite systems where messages intended to different users are embedded in each data stream.
Users decoding the same stream (or frame) form a multicast group, and multiple of such groups are usually co-scheduled in an overloaded manner \cite{Christopoulos2013,Christopoulos2015a,Vazquez2016,Joroughi2017}.
In the existing literature on multigroup multicast beamforming,
overloaded transmissions have been implicitly considered through simulations with system parameters that correspond to such scenarios \cite{Karipidis2008,Bornhorst2011,Schad2012,Christopoulos2015a}.
However, a comprehensive analysis and explicit treatment of interference in overloaded scenarios is absent.
In this work, we focus on the MMF problem in overloaded multigroup multicasting scenarios and explore the potentials of applying
alternative beamforming strategies.
Next, we give an overview of the main contributions of this paper.
\emph{Contributions}:
Considerable attention has been given to solving various multigroup multicast beamforming problems; however,
little has been done on the derivation of performance limits which have been mainly analysed through extensive simulations.
First, we make progress towards understanding the performance limits of the classical beamforming framework by
characterizing the MMF-DoF performance.
The MMF-DoF is a first-order approximation of the MMF-rate in the high SNR regime, which is roughly interpreted as
the maximum fraction of an interference-free  message that can be simultaneously communicated to all multicast groups.
The tractability of the DoF metric stems from its independence of varying system parameters such as the transmitting power and
channel realization.
Alternatively, it captures the interference-management capabilities as a function
of  the fixed system parameters, i.e.
the number of transmitting antennas, multicast groups and users in each group.
From the MMF-DoF, we identify the conditions under which the system is deemed overloaded,
and gain insights into the MMF-rate performance. For example, it reflects the
performance saturation at high SNRs in fully-overloaded scenarios.
Second, we depart from the classical beamforming paradigm of treating inter-group interference as noise,
and propose a beamforming strategy that incorporates Successive Interference Cancellation (SIC).
Multicast groups are ordered such that receivers decode messages (and cancel interference) in a successive manner;
a reminiscence of the Non-Orthogonal Multiple Access (NOMA) scheme proposed for multi-user beamforming in \cite{Hanif2016}.
The relevance of non-orthogonal schemes in overloaded scenarios follows from the fact that the number of messages communicated in the same resource block (time/frequency) exceeds the number of spatial DoF.
This non-orthogonal strategy degrades the channel, as at least one receiver ends up
decoding all messages, limiting the sum-DoF to unity.
An important implication is that while saturating MMF-rate performances are avoided,
all spatial multiplexing gains are in fact annihilated.
Hence, this strategy can be approximated (in a DoF sense) by a degraded single-beam strategy.

Third, we propose a generalized strategy that bridges the gap between the designated (classical) and degraded beamforming
strategies.
The proposed strategy is formulated in terms of Rate-Splitting (RS), where each message is divided into a degraded part and a designated part.
The transmitted signal is therefore a superposition of degraded and designated beamformed data streams.
RS has been recently applied in a variety of multiuser beamforming scenarios with uncertain channel state information at the transmitter \cite{Clerckx2016}\footnote{Contrary to the multigroup multicasting scenario considered in this paper, \cite{Clerckx2016} (and references therein) is mainly focused on conventional multiuser scenarios with unicast transmissions.}.
We show that the RS beamforming strategy brings significant gains to overloaded multigroup multicasting scenarios by deriving its MMF-DoF performance, and unveiling its strict superiority to the two preceding strategies.
RS exploits partial gains achieved through spatial multiplexing while maintaining a non-saturating performance through the degraded part, and goes beyond simply switching between both.
Fourth, we solve the RS MMF beamforming design problem by invoking the Weighted Minimum-Mean Square Error (WMMSE) approach \cite{Christensen2008,Shi2011}, which is particularly suitable for the problem due to the sum-rate expressions arising from rate-splitting.
Moreover, the performance gains achieved by employing the proposed beamforming strategy in overloaded scenarios are illustrated through simulation result.

The employment of RS in multigroup multicast beamforming was first proposed in a preliminary version of this paper, which can be found in \cite{Joudeh2016c}.
To simplify the DoF analysis, equally sized multicasting groups are assumed in \cite{Joudeh2016c}.
Moreover, \cite{Joudeh2016c} only provides a trivial lowerbound for the MMF-DoF of the RS strategy, which is shown to be loose through simulations.
In this paper, we pose no restrictions on the sizes of multicasting groups
and derive an exact characterization of the MMF-DoF performance achieved through the RS strategy.
\emph{Organization}: Section \ref{Section_System_Model} describes the general multigroup multicasting system model.
Classical beamforming is discussed in Section \ref{Section_NoRS_Problem}, in which we also define the DoF metric and derive the MMF-DoF performance of the classical strategy.
In Section \ref{Section_Degraded_Beamforming}, the NOMA inspired degraded beamforming strategy is proposed and its  MMF-DoF is characterized.
The RS strategy is proposed in Section \ref{Section_RS}, where its  MMF-DoF performance is also derived alongside some insights into how the MMF-DoFs of different strategies compare.
In Section \ref{Section_Optimization}, a WMMSE algorithm is developed to optimize the RS MMF beamforming strategy.
Simulation results are presented in Section \ref{Section_Numerical_Results} and Section \ref{Section_Conclusion} concludes the paper.

\emph{Notation}:
The following notations are used in the paper.
$a,A$ are scalars, $\mathbf{a}$ is a column vector, $\mathbf{A}$ is a matrix and $\mathcal{A}$ is a set.
$(a_{1},\ldots,a_{K})$ denotes a $K$-tuple of scalars, which is also represented by a column vector $\mathbf{a}$.
The superscrips $(\cdot)^{\Trn}$ and $(\cdot)^{\Hrm}$ denote the transpose and conjugate-transpose respectively.
$\mathrm{tr}(\cdot)$, $\|\cdot\|$ and $\E\{\cdot\}$ are the trace, Euclidian norm and expectation operators respectively.
$\mathrm{dim}(\cdot)$  and $\mathrm{null}(\cdot)$ refer to the dimension and the null space respectively.
\section{System Model}
\label{Section_System_Model}
We consider a wireless system comprising a single transmitter equipped with $N$ antennas and $K$ single-antenna receivers indexed by the set $\mathcal{K} \triangleq \{ 1, \ldots, K \}$.
Receivers are grouped into the $M$ $(1 \leq M \leq K)$ multicast groups $\mathcal{G}_{1},\ldots,\mathcal{G}_{M}$, where $\mathcal{G}_{m}$ is the set of receivers belonging to the $m$th group, $m \in \mathcal {M}$, and $\mathcal{M} \triangleq \{ 1, \ldots, M\}$ is the index set of all groups.
Such grouping is carried out based on content, i.e. receivers belonging to the same group are interested in the same message.
It is assumed that each receiver belongs to exactly one group. Thus $\bigcup_{m\in\mathcal{M}} \mathcal{G}_{m} = \mathcal{K}$
and $\mathcal{G}_{m} \cap \mathcal{G}_{j} =\emptyset, \forall m,j \in \mathcal{M}$ and
$m\neq j$.
Denoting the size of the $m$th group by $G_{m} = |\mathcal{G}_{m}| $, it is assumed without loss of generality that group sizes are in an ascending order, i.e.
\begin{equation}
\label{Eq_Group_order}
G_{1} \leq G_{2} \leq \ldots \leq G_{M}.
\end{equation}
To map users to their respective groups, we define $\mu: \mathcal{K} \rightarrow \mathcal{M}$ such that
$\mu(k) = m$ for all $k \in \mathcal{G}_{m}$.

Let $\mathbf{x} \in \mathbb{C}^{N}$ denote the signal vector transmitted in a given channel use.
This transmitted signal is subject to an average power constraint $\E \left\{ \mathbf{x}^{\Hrm}\mathbf{x} \right\} \leq P$, where $P  > 0$.
Denoting the corresponding signal received by the $k$th user as $y_{k}$, the input-output relationship is given as
\begin{equation}
y_{k} = \mathbf{h}_{k}^{\Hrm} \mathbf{x} + n_{k}
\end{equation}
where $\mathbf{h}_{k}\in \mathbb{C}^{N}$ is the channel vector between the transmitter and the $k$th user, and $n_{k} \in \mathcal{CN}(0,\sigma_{\mathrm{n},k}^{2})$ is the receiver Additive White Gaussian Noise (AWGN).
The channel matrix composed of $K$ channel vectors is given by
$\mathbf{H} \triangleq \left[\mathbf{h}_{1} \; \cdots \; \mathbf{h}_{K}\right]$.
It is assumed without loss of generality that $\sigma_{\mathrm{n},1}^{2},\ldots,\sigma_{\mathrm{n},K}^{2} = \sigma_{\mathrm{n}}^{2}$, from which the transmit SNR is given by $P/\sigma_{\mathrm{n}}^{2}$.
Moreover, it is further assumed that the transmitter perfectly knows $\mathbf{H}$ and each receiver knows its own channel.
In multigroup multicasting, the transmitter wishes to communicate the distinct messages
$W_{1},\ldots,W_{M}$ to users in $\mathcal{G}_{1},\ldots,\mathcal{G}_{M}$ respectively.
Messages are mapped to the transmitted signal as $W_{1},\ldots,W_{M} \mapsto \mathbf{x}$ using some encoding function.
On the other end of the channel, messages are decoded from the received signals as
$y_{k} \mapsto \hat{W}_{\mu(k)}^{k}$, where $\hat{W}_{\mu(k)}^{k}$ is the $k$th user's estimate of $W_{\mu(k)}$.
For a given strategy, the group-rates denoted by $r_{1},\ldots,r_{M}$ correspond to the maximum rates of communicating $W_{1},\ldots,W_{M}$ respectively, while guaranteeing successfully decoding (with high probability) by all corresponding users.
Transmission strategies can be designed to optimize various objective functions subject to different constraints.
Here we are interested in MMF designs which aim to maximize the symmetric rate that can be simultaneously achieved by all groups subject to a power constraint.
Moreover, channel coding is abstracted by assuming Gaussian codes and the focus is on designing and analysing the beamforming schemes.
In this context, it is worth mentioning that from an information theoretic point of view, the setup at hand
is modeled by a compound MISO broadcast channel \cite{Maddah-Ali2010,Gou2011}.
For a class of overloaded scenarios, it was shown that the optimum sum-DoF is achieved through interference alignment over rational dimensions by exploiting the signal level (or power) domain.
While the employment of the power domain features in this work, we focus on a class of strategies based on beamforming.
This is more inline with current deployments of multiuser MIMO systems \cite{Castaneda2017}.
Moreover, we consider one-shot transmission schemes with no time-sharing between strategies.
This is suitable for systems with rigid scheduling and/or tight latency constraints, for example as in multibeam satellite systems \cite{Christopoulos2013,Christopoulos2015a,Vazquez2016,Joroughi2017}, and also allows for simpler designs.
\section{Designated (Classical) Beamforming}
\label{Section_NoRS_Problem}
In classical beamforming, the $M$ messages are first mapped into independent designated symbol streams as
$W_{1},\ldots,W_{M} \mapsto  s_{1},\ldots,s_{M}$, which are then beamformed as
\begin{equation}
\label{Eq_x_NoRS}
\mathbf{x} = \sum_{m=1}^{M} \mathbf{p}_{m}s_{m}
\end{equation}
where $\mathbf{p}_{m} \in \mathbb{C}^{N}$ denotes the $m$th group's designated beamforming vector.
Defining $\mathbf{s} \triangleq [s_{1} \; \cdots \; s_{M}]^{\Trn}$ and assuming that $\E\{\mathbf{s} \mathbf{s}^{\Hrm} \} = \mathbf{I}$,
the transmit power constraint under beamforming reduces to $\sum_{m=1}^{M} \|\mathbf{p}_{m}\|^{2}  \leq P$.
In some analysis, it helps to emphasize the structure of the beamformers. Hence, we write
$\mathbf{p}_{m} = \sqrt{q_{m}} \mathbf{w}_{m}$, where $q_{m} = \|\mathbf{p}_{m}\|^{2}$ is the power allocated to the $m$th beam
and $\mathbf{w}_{m}$ is a unit vector that denotes the beamforming direction.

The Signal to Interference plus Noise Ratio (SINR) experienced by the $k$th user is given by
\begin{equation}
\gamma_{k}  = \frac{|\mathbf{h}_{k}^{\Hrm}\mathbf{p}_{\mu(k)}|^{2}}{\sum_{m\neq \mu(k)} |\mathbf{h}_{k}^{\Hrm}\mathbf{p}_{m}|^{2} + \sigma_{\mathrm{n}}^{2}}.
\end{equation}
The achievable rate from the $k$th user's point of view under Gaussian signalling is given by $R_{k} = \log_{2}(1 + \gamma_{k} )$.
Since the $m$th data stream carries a message that should be decoded by all users in $\mathcal{G}_{m}$, the corresponding code-rate should not exceed the rate achievable by the \emph{weakest} receiver in the group.
Hence, the group-rate is given by
\begin{equation}
\label{Eq_group_rate}
r_{m} = \min_{i \in \mathcal{G}_{m}} R_{i}.
\end{equation}
\subsection{Achieving Max-Min Fairness}
The objective of the MMF design is to achieve fairness among groups subject to a transmit power constraint.
The classical beamforming MMF problem is formulated as
\begin{equation}
\label{Eq_Problem_R_NoRS}
\mathcal{R}(P):
\begin{cases}
       \underset{\mathbf{P}}{\max} & \underset{m \in \mathcal{M}}{\min} \
        \underset{i \in \mathcal{G}_{m}}{\min} \ R_{i}  \\
       \text{s.t.}  & \displaystyle{\sum_{m=1}^{M} \|\mathbf{p}_{m}\|^{2} \leq P}
\end{cases}
\end{equation}
where $\mathbf{P} \triangleq [\mathbf{p}_{1} \; \cdots \; \mathbf{p}_{M}]$ is the beamforming (or precoding) matrix. The inner minimization in \eqref{Eq_Problem_R_NoRS} accounts for the multicast nature within each group as shown in \eqref{Eq_group_rate}. On the other hand, the outer minimization accounts for the fairness across groups.
It is common practice to formulate the beamforming problem in terms of the SINRs which is equivalent to \eqref{Eq_Problem_R_NoRS} due to the  one-to-one monotonic relationship between $R_{i}$ and $\gamma_{i}$.
Problems are formulated in terms of the achievable rates in this work to facilitate the DoF performance analysis.
In this section we characterize the performance of classical beamforming through DoF analysis.
The relevance of such analysis follows from the fact that achieving max-min fairness requires a simultaneous increase in powers allocated to all streams as $P$ increases.
In scenarios where $N$ is sufficient to place each beam in the null space of all its unintended groups, each multicast group receives an interference free stream.
However, if such condition is violated, the transmission becomes interference limited and DoF analysis can help us gain insight into the performance.
\subsection{Degrees of Freedom}
To facilitate the definition of the DoF metric, we start by defining a beamforming scheme as a family of feasible beamforming matrices, with one for each SNR (or power) level.
This is denoted by $\left\{ \mathbf{P} (P) \right\}_{P}$, where $\mathbf{P}(P)$ is associated with the $P$th level and adheres to the power constraint.
A beamforming scheme is associated with a set of achievable user-rate tuples given by
$\left\{ \left( R_{1}(P),\ldots,R_{K}(P) \right) \right\}_{P}$, where
$\left( R_{1}(P),\ldots,R_{K}(P) \right)$ is the user-rate tuple achieved by employing the beamforming matrix $\mathbf{P}(P)$.
Similarly, we have the set of group-rate tuples $\left\{ \left( r_{1}(P),\ldots,r_{M}(P) \right) \right\}_{P}$, where
$\left( r_{1}(P),\ldots,r_{M}(P) \right)$ is associated with $\mathbf{P}(P)$ and
$\left( R_{1}(P),\ldots,R_{K}(P) \right)$.
A beamforming scheme is also associated with user and group DoF tuples.
The user-DoF tuple is denoted by $\left(D_{1},\ldots,D_{K} \right)$, where the $k$th user-DoF is given as
\begin{equation}
\label{Eq_user_DoF}
D_{k} \triangleq \lim_{P \rightarrow \infty} \frac{R_{k}(P) }{\log_{2}(P)}.
\end{equation}
The corresponding group-DoF tuple is denoted by $\left( d_{1},\ldots,d_{M} \right)$, where $d_{m}$ is given as\footnote{In \eqref{Eq_group_DoF}, (a) follows from continuity by passing the limit in \eqref{Eq_user_DoF} inside the $\min$ function in \eqref{Eq_group_rate}.}
\begin{equation}
\label{Eq_group_DoF}
d_{m} \triangleq \lim_{P \rightarrow \infty} \frac{r_{m}(P) }{\log_{2}(P)} \overset{\text{(a)}}{=} \min_{i \in \mathcal{G}_{m}} D_{i}.
\end{equation}
It can be seen from \eqref{Eq_user_DoF} and \eqref{Eq_group_DoF} that the DoF metric is independent of the actual power level $P$.
Alternatively, it captures the asymptotic rate growth with respect to the capacity of an interference-free single-stream transmission approximated by $\log_{2}(P)$.
Hence, the tuple $\left( d_{1},\ldots,d_{M} \right)$ can be interpreted as the fractions of interference-free transmissions that can be simultaneously achieved by the $M$ groups at high SNR where inter-group interference is the main limiting factor.

A number of meaningful scalar performance measures can be derived from DoF tuples.
This work is concerned with the symmetric performance, and hence we focus on the MMF-DoF (symmetric-DoF)  defined as $d \triangleq \min_{m \in \mathcal{M}}d_{m}$.
For a given beamforming scheme, $d$ corresponds to the maximum group-DoF that can be simultaneously achieved by all groups.
\subsection{MMF-DoF Performance}
\label{Subsection_MMF_DoF_NoRS}
For a given system, we denote the optimum MMF beamforming scheme by $\left\{ \mathbf{P}^{\star} (P) \right\}_{P}$.
This is obtained by solving \eqref{Eq_Problem_R_NoRS} for every power level $P$, i.e. $\mathbf{P}^{\star} (P) = \arg \mathcal{R}(P)$.
The optimum MMF beamforming scheme comes with a corresponding MMF-DoF given by $d^{\star} \triangleq \lim_{P \rightarrow \infty}  \frac{\mathcal{R}(P)}{\log_{2}(P)}$, which we characterize in this part.
The fact that each user is equipped with a single antenna sets a trivial upperbound on the MMF-DoF.
In particular, we have
\begin{equation}
\label{Eq_MMF_DoF_upperbound_multicast}
d \leq d_{m} \leq D_{i} \leq 1, \ \forall i \in \mathcal{G}_{m}, m \in \mathcal{M}.
\end{equation}
Hence, if $d= 1$  is achievable, then $d^{\star} = 1$. In this case, it is possible to beam an interference-free stream to each group simultaneously.
For  DoF analysis, we make the following assumption.
\begin{assumption}
\label{Assumption_Generic_Channel}
\textnormal{
The channel vectors $\mathbf{h}_{1},\ldots,\mathbf{h}_{K}$ are independently drawn from continuous distributions.
Hence, for any $N\times K_{\mathrm{sub}}$ matrix in which the $K_{\mathrm{sub}}$ columns constitute any subset of the $K$ channel vectors, it holds with probability one that the rank is $\min\{ N,K_{\mathrm{sub}} \}$.
}
\end{assumption}
Now let us define $\mathbf{H}_{m}$ as the matrix with columns constituting channel vectors of all users in $\mathcal{G}_{m}$.
On the other hand, $\bar{\mathbf{H}}_{m} \triangleq [\mathbf{H}_{1}\cdots\mathbf{H}_{m-1} \; \mathbf{H}_{m+1} \cdots \mathbf{H}_{M}]$ is the matrix composed of the complementary set of channel vectors.
More generally, let $\mathcal{L} = \{m_{1},\ldots,m_{L}\} \subseteq \mathcal{M} $ be a subset of $L$ groups, and
$\bar{\mathcal{L}} = \mathcal{M} \setminus \mathcal{L}$ be the subset of complementary groups.
We define $\mathbf{H}_{\mathcal{L}} \triangleq [\mathbf{H}_{m_{1}} \; \cdots \; \mathbf{H}_{m_{L}}]$ as the channel matrix for all users in groups $\mathcal{G}_{m_{1}}\ldots,\mathcal{G}_{m_{L}}$, and $\bar{\mathbf{H}}_{\mathcal{L}} \triangleq \mathbf{H}_{\bar{\mathcal{L}}}$ as the complementary channel matrix.
From Assumption \ref{Assumption_Generic_Channel}, the number of spatial signalling dimensions orthogonal to the subspace occupied
by receivers in $\mathcal{G}_{m_{1}}\ldots,\mathcal{G}_{m_{L}}$ is given by
\begin{equation}
\label{Eq_null_dimension}
\mathrm{dim} \Big( \mathrm{null}\big( \mathbf{H}_{\mathcal{L}}^{\Hrm}  \big) \Big) =
\max \left\{N-\sum_{l = 1}^{L}G_{m_{l}},0 \right\}.
\end{equation}
It follows that $\mathrm{dim} \left( \mathrm{null}\left( \bar{\mathbf{H}}_{m}^{\Hrm}  \right) \right) \geq 1$ if and only if
\begin{equation}
\label{Eq_condition_N_N_m}
N \geq 1 + K - G_{m}.
\end{equation}
This allows nulling all interference cause by the $m$th beamformer, i.e. $\mathbf{p}_{m} \in \mathrm{null}\big( \bar{\mathbf{H}}_{m}^{\Hrm}  \big)$.
From \eqref{Eq_Group_order}, we have
$\mathrm{dim} \left( \mathrm{null}\left( \bar{\mathbf{H}}_{m_{1}}^{\Hrm}  \right) \right) \leq \mathrm{dim} \left( \mathrm{null}\left( \bar{\mathbf{H}}_{m_{2}}^{\Hrm}  \right) \right)$
for all $m_{1},m_{2} \in \mathcal{M}$  and $m_{1} \leq m_{2}$.
This reflects the fact that nulling interference caused to larger groups is more \emph{demanding} in terms of spatial dimensions.
Hence, satisfying \eqref{Eq_condition_N_N_m} for all $m \in \mathcal{M}$ is equivalent to
\begin{equation}
\label{Eq_min_Nt_max_DoF}
N \geq 1 + K - G_{1}.
\end{equation}
Note that in the extreme case of single-group multicasting $(M = 1)$, the condition in \eqref{Eq_min_Nt_max_DoF}
becomes $N \geq 1$. In this case, a transmitting antenna array provides beamforming gain but no DoF gain as a single antenna achieves the single-stream's DoF upperbound in \eqref{Eq_MMF_DoF_upperbound_multicast}.
On the other hand, the condition in \eqref{Eq_min_Nt_max_DoF} becomes $N \geq K$ for the opposite extreme of multiuser beamforming ($M = K$), under which one spatial dimension per user is necessary to guarantee perfect interference nulling.
By fixing the number of groups and increasing the number of users per group beyond $1$, the dimension of the subspace occupied by each group increases. Hence, interference nulling requires more spatial dimensions.
A question that comes to mind at this point is: what happens to the MMF-DoF when condition $\eqref{Eq_min_Nt_max_DoF}$ is violated?
Before proceeding, we define
\begin{equation}
\label{Eq_N_L}
N_{L} \triangleq 1 + \sum_{m=2}^{L} G_{m} = 1 + K - G_{1} - \sum_{j=L+1}^{M} G_{j}
\end{equation}
for all $L \in \mathcal{M}$, where $N_{1} = 1$.
$N_{L}$ is interpreted as the minimum number of transmitting antennas required to serve the subset of groups $\{1,\ldots,L\}$ using interference-free beamforming while disregarding all remaining groups.
\begin{proposition}
\label{Proposition_DoF_NoRS}
\textnormal{
The MMF-DoF achieved by classical beamforming is given by
\begin{align}
\label{Eq_max_min_DoF_NoRS}
d^{\star} \triangleq \lim_{P \rightarrow \infty}  \frac{\mathcal{R}(P)}{\log_{2}(P)} =
\begin{cases}
       1, &  N \geq  N_{M} \\
       0.5, & N_{M-1} + G_{1} \leq N < N_{M} \\
       0, &   N < N_{M-1} + G_{1}.
\end{cases}
\end{align}
}
\end{proposition}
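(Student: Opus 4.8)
The plan is to establish the three regimes separately, each by a matching converse and achievability, using the single-antenna bound $d^\star\le 1$ from \eqref{Eq_MMF_DoF_upperbound_multicast}. Throughout I would work with the per-beam power exponents $b_m \triangleq \lim_{P\to\infty}\log_2(\|\mathbf{p}_m\|^2)/\log_2 P$ and $\beta_m\triangleq\max\{b_m,0\}$, passing to a subsequence of $P$ along which all these limits exist (they lie in the compact set $[-\infty,1]$ by the power constraint). The case $N\ge N_M$ is immediate: since $N_M=1+K-G_1$, the hypothesis is exactly \eqref{Eq_min_Nt_max_DoF}, so by Assumption \ref{Assumption_Generic_Channel} and \eqref{Eq_null_dimension} each $\mathbf{p}_m$ can be placed in $\mathrm{null}(\bar{\mathbf{H}}_m^{\Hrm})$ at full power, giving every group an interference-free stream with $d_m=1$; combined with the upper bound this yields $d^\star=1$.

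Both converses would rest on a single genericity fact: if a set $\mathcal{S}$ of users has $|\mathcal{S}|\ge N$ channels, then by Assumption \ref{Assumption_Generic_Channel} they span $\mathbb{C}^N$ with bounded condition number, so no beam can be simultaneously near-orthogonal to all of them; concretely $\max_{k\in\mathcal{S}}|\mathbf{h}_k^{\Hrm}\mathbf{p}_m|^2\ge c\|\mathbf{p}_m\|^2$ for a channel-dependent constant $c>0$, whence the largest interference exponent beam $m$ inflicts on $\mathcal{S}$ equals $\beta_m$. For the middle regime, $N<N_M$ gives $N\le K-G_1$, so the complement of the smallest group spans $\mathbb{C}^N$. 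Hence beam $1$ strikes some user $k^\star$ outside $\mathcal{G}_1$ at exponent $\beta_1$, forcing that user's group-DoF to at most $\beta_{\mu(k^\star)}-\beta_1\le 1-\beta_1$; meanwhile any user in $\mathcal{G}_1$ has signal exponent at most $\beta_1$, so $d_1\le\beta_1$. Taking the minimum gives $d^\star\le\min\{\beta_1,1-\beta_1\}\le\tfrac12$.

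For achievability in this regime (valid because $N\ge N_{M-1}+G_1=1+K-G_M$) I would use superposition with two power levels: beam $M$ placed in $\mathrm{null}(\bar{\mathbf{H}}_M^{\Hrm})$ at power $\sim P$ (feasible by \eqref{Eq_null_dimension} since $N\ge 1+K-G_M$), while each beam $m<M$ is placed at power $\sim\sqrt{P}$ in the null space of the remaining small groups $\{1,\dots,M-1\}\setminus\{m\}$ (feasible since $N\ge 1+K-G_M>1+K-G_M-G_m$). Then each small group sees only noise-level interference and achieves DoF $\tfrac12$ from its own $\sqrt{P}$ stream, while group $M$ enjoys a clean $P$-level signal corrupted only by the $\sqrt{P}$ small-group beams, also achieving $\tfrac12$; hence $d^\star\ge\tfrac12$. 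For the last regime, $N<N_{M-1}+G_1$ gives $N\le K-G_M\le K-G_m$ for every $m$, so the complement of \emph{every} group spans $\mathbb{C}^N$.

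The crux, and the step I expect to be hardest, is the converse $d^\star=0$ in this last regime. Applying the genericity fact to each $m$ produces a user $k^\star(m)$ outside $\mathcal{G}_m$ seeing beam $m$ at exponent $\beta_m$; setting $\sigma(m)\triangleq\mu(k^\star(m))\ne m$ defines a fixed-point-free self-map of $\mathcal{M}$ satisfying $d_{\sigma(m)}\le[\beta_{\sigma(m)}-\beta_m]^+$. If $d^\star>0$ then $\beta_{\sigma(m)}>\beta_m$ for every $m$, so $\sigma$ strictly increases $\beta$; iterating over the finite set $\mathcal{M}$ forces a cycle on which $\beta$ is strictly increasing yet returns to its starting value, a contradiction. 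Hence some group has DoF $0$ and $d^\star=0$. The delicate points will be making the exponent limits and the ``cannot hide from a spanning set'' estimate rigorous uniformly in $P$, and confirming that a fixed-point-free self-map of a finite set always contains a cycle of length at least two, which drives the contradiction.
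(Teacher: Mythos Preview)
Your proof is correct and tracks the paper's approach closely: the achievability schemes (null-space directions with the two-level power split in the middle regime) and the $d^\star\le 0.5$ converse coincide with the paper's (you take $\min\{\beta_1,1-\beta_1\}$, the paper averages $d_1$ and $d_{m_1}$; both yield $\tfrac12$). The one departure is the $d^\star=0$ converse. Where you build a fixed-point-free self-map $\sigma$ and extract a contradiction from a strictly $\beta$-increasing cycle, the paper simply picks the index $m_1$ attaining $\bar a=\max_j a_j$, notes that $\mathbf{p}_{m_1}$ must interfere with some other group $m_2$, and concludes $d\le d_{m_2}\le(a_{m_2}-\bar a)^+=0$ in one step. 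Your cycle argument is valid but can be short-circuited by starting directly at the maximum exponent.
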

Results as the one above are commonly shown through two steps: 1) achievability and 2) converse.
In the achievability, it is shown that there exists at least one feasible beamforming scheme that achieves the DoF in \eqref{Eq_max_min_DoF_NoRS}.
In the converse, it is shown that no feasible beamforming scheme can achieve a higher DoF by deriving a tight upperbound.
The achievability of Proposition \ref{Proposition_DoF_NoRS} is discussed in the following, while the converse is relegated to Appendix \ref{Appendix_Converse_NoRS}.
\subsubsection{Achievability of Proposition \ref{Proposition_DoF_NoRS}}
\label{Subsubsection_achievability_proposition_NoRS}
Showing that $d = 1$ under $N \geq  N_{M}$ follows from the discussion that precedes Proposition \ref{Proposition_DoF_NoRS}, and achieving $d = 0$ is trivial.
Hence, we focus on achieving $d = 0.5$.
It is sufficient to show this under $N  =  N_{M-1} + G_{1}$
as further increasing the number of transmitting antennas cannot decrease the DoF.
Next, we observe the following:
\begin{itemize}
\item By excluding the largest group, interference free transmission amongst the remaining $M-1$ groups is possible. This follows by removing the $M$th group from the system and rewriting the condition in \eqref{Eq_min_Nt_max_DoF} as $N \geq  N_{M-1}$.
\item Interference from the $M$th beam to all other groups can be nulled. This follows from \eqref{Eq_condition_N_N_m}.
\end{itemize}
Now consider a beamforming scheme in which a beamforming matrix takes the form
\begin{equation}
\label{Eq_P_NoRS_achievability}
\mathbf{P}(P) = \left[\sqrt{q_{1}(P)}\mathbf{w}_{1} \; \cdots \; \sqrt{q_{M}(P)}\mathbf{w}_{M}\right]
\end{equation}
where the power allocation depends on $P$ while the beamforming directions do not.
The beamforming directions are designed according to the two observations above such that
\begin{align}
\label{Eq_p_m_DoF_05}
\mathbf{w}_{m} \in
\begin{cases}
\mathrm{null}\big( \bar{\mathbf{H}}_{\{m,M\}}^{\Hrm}  \big), & \forall  m \in \mathcal{M} \setminus M\\
\mathrm{null}\big( \bar{\mathbf{H}}_{M}^{\Hrm} \big), & m=M.
\end{cases}
\end{align}
The $k$th user's SINR at power level $P$ is given by
\begin{align}
\gamma_{k}(P) =
\begin{cases}
\frac{q_{\mu(k)}(P) |\mathbf{h}_{k}^{\Hrm} \mathbf{w}_{\mu(k)} |^{2} }{\sigma_{\mathrm{n}}^{2}}, & \forall  k \in \mathcal{K}\setminus \mathcal{G}_{M}\\
 \frac{q_{M}(P)|\mathbf{h}_{k}^{\Hrm} \mathbf{w}_{M} |^{2}}{\sum_{j \neq M} q_{j}(P) |\mathbf{h}_{k}^{\Hrm} \mathbf{w}_{j} |^{2} + \sigma_{\mathrm{n}}^{2}}
, &  \forall k \in \mathcal{G}_{M}.
\end{cases}
\end{align}
It can be seen that users in groups $1,\ldots,M-1$ see no interference at all, while users in group $M$ see interference from all other groups.
Next, power allocation is carried out such that all user SINRs achieve the same power scaling.
This is achieved by power allocations scaling as\footnote{We use the standard Landau notation $O(\cdot)$ to describe power scaling.
Specifically, for real-valued functions $f(P),g(P)$, the statement $f(P) = O\left( g(P) \right)$ means that
$\lim_{P \rightarrow \infty} \frac{|f(P)|}{|g(P)|} < \infty$.}
\begin{align}
\label{Eq_power_scaling_DoF_05}
q_{m}(P) =
\begin{cases}
O\big(P^{0.5} \big), & \forall  m \in \mathcal{M} \setminus M\\
O\big(P \big), & m=M.
\end{cases}
\end{align}
For example, one power allocation that satisfies \eqref{Eq_power_scaling_DoF_05} while adhering to the power constraint is
\begin{align}
q_{m}(P) =
\begin{cases}
\frac{P^{0.5}}{M-1}, & \forall  m \in \mathcal{M} \setminus M\\
P - P^{0.5}, & m=M.
\end{cases}
\end{align}
Since $|\mathbf{h}_{k}^{\Hrm} \mathbf{w}_{m} |^{2} = O(1)$ for all $k \in \mathcal{K}$ and $m \in \mathcal{M}$,
we have $\gamma_{k} = O\big(P^{0.5} \big)$ and $R_{k} = 0.5\log_{2}(P) + O(1)$ for all $k \in \mathcal{K}$.
Hence, the proposed scheme achieves $D_{k} = 0.5$ for all $k \in \mathcal{K}$, from which the group-DoF tuple $(d_{1},\ldots,d_{M}) = (0.5,\ldots,0.5)$ is achieved, and hence $d = 0.5$.

Note that for the DoF achievability, it is sufficient to use simple zero-forcing precoders which are generally suboptimal from a rate perspective. This is a widely observed phenomenon in the MIMO literature, and is due to the fact that the DoF capture the number of interference free dimensions and, unlike achievable rates, are not influenced by $O(1)$ power gains.
\subsubsection{Insight}
It is evident that nulling all interference seen by receivers in $\mathcal{G}_{M}$ is most expensive in terms of spatial dimensions. Alternatively, the scheme reserves the spatial dimensions to achieve interference free transmission amongst the remaining $M-1$ groups.
This comes at the expense of sacrificing part of the $M$th group's received signal subspace, now occupied by interference from the other $M - 1$ beams. Interference is made to scale as $O\big(P^{0.5} \big)$ through power control, which in turn limits the MMF-DoF to $0.5$.
This is shown to be the optimum classical beamforming strategy in the DoF sense in Appendix \ref{Appendix_Converse_NoRS}.
Since multiplexing gains are partially achieved in such scenarios, they are labeled as partially-overloaded.
When $N$ drops below $N_{M-1}+G_{1}$, the interference from the $M$th group cannot be eliminated anymore, creating mutual interference between at least two groups from which one group's gain becomes the other's loss. As a result, the MMF-DoF collapses to zero as shown in Appendix \ref{Appendix_Converse_NoRS}. Such scenarios are identified as fully-overloaded\footnote{For equal size groups, $N_{M} = N_{M-1}+G_{1}$ and the MMF-DoF collapses to $0$ as soon as the condition in \eqref{Eq_min_Nt_max_DoF} is violated.}.
\subsection{The Role of DoF Analysis}
\begin{figure}
\centering
\includegraphics[width = 0.45\textwidth]{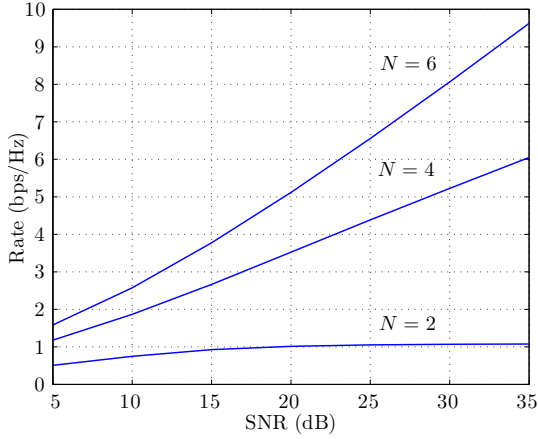}
\caption{MMF rate performance obtained by solving \eqref{Eq_Problem_R_NoRS} using the method in \cite{Karipidis2008}
 for $K=6$ users, $M=3$ groups, $G_{1},G_{2}$ and $G_{3}$ equal to $1,2$ and $3$ users respectively, and
 $N = 2,4$ and $6$ antennas.
 The MMF rates presented correspond to the SDR upperbound (no randomization), and results are obtained by averaging over $100$ i.i.d Rayleigh fading channels.}
\label{Fig_NoRS_example}
\end{figure}
While a beamforming scheme obtained from \eqref{Eq_Problem_R_NoRS} is guaranteed to achieve the MMF-DoF in \eqref{Eq_max_min_DoF_NoRS}, the converse is not always true, i.e. a scheme that achieves \eqref{Eq_max_min_DoF_NoRS} is not necessarily optimum from the MMF-rate perspective.
For example, while all beamforming schemes that satisfy  \eqref{Eq_p_m_DoF_05} and \eqref{Eq_power_scaling_DoF_05} achieve the same MMF-DoF, the actual beamforming directions and power allocation may have a significant influence on the achievable rate performance.
Moreover, having fixed beamforming directions for all SNR levels is suboptimal in general  as there is more to designing an optimum beamforming scheme than simply nulling interference, which may not be the primary limiting factor in medium and low SNR regimes.
This indeed may raise some questions regarding the role and effectiveness of DoF analysis in the design and optimization process.
To highlight the potential role of DoF analysis in guiding the design of new beamforming strategies, we present a numerical example in
Fig. \ref{Fig_NoRS_example}.
Despite the fact that DoF analysis is carried out as SNR goes to infinity, its results are highly influential and visible at finite SNRs.
For example, the three MMF-DoF regimes characterized in Proposition \ref{Proposition_DoF_NoRS} can be clearly identified from Fig. \ref{Fig_NoRS_example}.
This is due to the dominating effect of intergroup-interference compared to the effect of additive noise.
Moreover, the detrimental implications of $d = 0$ can be clearly observed in Fig. \ref{Fig_NoRS_example}, i.e.
the MMF-rate stops growing as SNR increases, reaching a saturated performance.
Such fully-overloaded scenarios are characterized through the MMF-DoF as seen in \eqref{Eq_max_min_DoF_NoRS}.
Hence, although the specific finite-SNR rates cannot be precisely predicted from DoF analysis,
insights into the interference-dominated MMF-rate performance can be drawn, guiding the development of more efficient beamforming strategies
as we see in the following sections.
\section{Degraded Beamforming}
\label{Section_Degraded_Beamforming}
As we observed in the previous section, the antenna regime in which a system becomes overloaded was identified through DoF analysis.
In this section, we propose a scheme that improves the MMF-DoF in Proposition \ref{Proposition_DoF_NoRS} when $N < N_{M-1}+G_{1}$.
This is achieved through degraded beamforming, where data streams are decoded and cancelled in a successive manner.
The terminology follows from the fact that such (enforced) ordering and successive decoding degrade the channel, hence annihilated all spatial multiplexing gains.
First, to define a decoding order, we use the premutation function $\pi: \mathcal{M} \rightarrow \mathcal{M}$ which permutes the set
$\mathcal{M}$ such that $\pi(m) \neq \pi(j)$ for all $m \neq j$.
An arbitrary user in the $\pi(m)$th group starts by decoding the $\pi(1)$th stream, which is then removed from the received signal using interference cancellation.
This is followed by decoding and removing the $\pi(2)$th stream and so on until the $\pi(m)$th stream is decoded.
Hence, the $\pi(m)$th stream sees interference from the $\pi(j)$th stream only if $j > m$.
Since users may decode streams not intended to them, it is necessary to define the SINR of the $\pi(m)$th stream from the $k$th receiver's perspective as
\begin{equation}
\label{Eq_SINR_Degraded}
\widetilde{\gamma}_{k}^{\pi(m)} =  \frac{|\mathbf{h}_{k}^{\Hrm} \widetilde{\mathbf{p}}_{\pi(m)} |^{2}}{\sum_{j = m+1}^{M}  |\mathbf{h}_{k}^{\Hrm} \widetilde{\mathbf{p}}_{\pi(j)} |^{2} + \sigma_{\mathrm{n}}^{2}}
\end{equation}
where $k$ is not necessarily in $\mathcal{G}_{\pi(m)}$.
In particular, this is relevant for $k \in \{ \mathcal{G}_{\pi(m)},\ldots,\mathcal{G}_{\pi(M)}\}$,
as the remaining users would not reach the $\pi(m)$th stream in the successive decoding chain.
The notation $\widetilde{\cdot}$ in \eqref{Eq_SINR_Degraded} is used to define quantities associated with the degraded transmission.
The rate at which the $\pi(m)$th stream should be transmitted such that the $k$th user is able to  successfully decode it is given by $\widetilde{R}_{k}^{\pi(m)} = \log_{2}(1+\widetilde{\gamma}_{k}^{\pi(m)})$.
Since the $\pi(m)$th stream should be successfully decoded by all receivers
in $\mathcal{G}_{\pi(m)},\ldots,\mathcal{G}_{\pi(M)}$, the $\pi(m)$th group-rate is restricted to
\begin{equation}
\label{Eq_group_rate_Degraded}
r_{\pi(m)} = \min_{i \in \{\mathcal{G}_{\pi(m)},\ldots,\mathcal{G}_{\pi(M)}\}}  \widetilde{R}_{i}^{\pi(m)}.
\end{equation}
The corresponding MMF optimization problem is given by
\begin{equation}
\label{Eq_Problem_R_Degraded}
\widetilde{\mathcal{R}}(P):
\begin{cases}
       \underset{\pi,\widetilde{\mathbf{P}}}{\max} & \underset{\pi(m) \in \mathcal{M}}{\min} \
       \underset{i \in \{\mathcal{G}_{\pi(m)},\ldots,\mathcal{G}_{\pi(M)}\}}{\min}  \widetilde{R}_{i}^{\pi(m)}  \\
       \text{s.t.}  & \displaystyle{\sum_{m=1}^{M} \|\widetilde{\mathbf{p}}_{\pi(m)}\|^{2} \leq P}
\end{cases}
\end{equation}
where $\widetilde{\mathbf{P}} \triangleq [\widetilde{\mathbf{p}}_{1} \; \cdots \; \widetilde{\mathbf{p}}_{M}]$ is the degraded beamforming matrix.
Note that the permutation function is an optimization variable as the achievable rates are influenced by the decoding order.
\subsection{DoF Analysis}
It is evident that each receiver in $\mathcal{G}_{\pi(M)}$ ends up decoding all $M$ data streams.
This degrades the channel and limits the sum group-DoF to one, which can be further split equally amongst groups.
This is formally shown in the following result.
\begin{proposition}
\label{Proposition_DoF_Degraded}
\textnormal{
The MMF-DoF achieved by the degraded beamforming strategy is given by
\begin{equation}
\label{Eq_max_min_DoF_Degraded}
\widetilde{d}^{\star} \triangleq
\lim_{P \rightarrow \infty}  \frac{\widetilde{\mathcal{R}}(P)}{\log_{2}(P)} = \frac{1}{M}.
\end{equation}
}
\end{proposition}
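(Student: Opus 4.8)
The plan is to establish the result via the standard achievability-and-converse approach. Since the design is symmetric and $\widetilde{d}^{\star} \leq d_{m} \leq 1$ for every group, I would show separately that $\frac{1}{M}$ is both achievable and an upperbound on $\widetilde{d}^{\star}$.

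For achievability, I would fix an arbitrary decoding order $\pi$ and use a single common beamforming direction for all streams, i.e. $\widetilde{\mathbf{p}}_{\pi(m)} = \sqrt{q_{\pi(m)}(P)}\,\mathbf{w}$, where $\mathbf{w}$ is chosen generically so that $|\mathbf{h}_{k}^{\Hrm}\mathbf{w}|^{2} = O(1)$ and is bounded away from zero for every $k$ (which holds for almost every direction under Assumption \ref{Assumption_Generic_Channel}). The power would be allocated in a layered fashion so that every SINR in \eqref{Eq_SINR_Degraded} scales as $P^{1/M}$, by assigning $q_{\pi(m)}(P) = O\big(P^{(M-m+1)/M}\big)$. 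Then the $\pi(m)$th stream, which sees interference only from streams $\pi(m+1),\ldots,\pi(M)$, has its denominator dominated by $q_{\pi(m+1)}(P) = O\big(P^{(M-m)/M}\big)$, yielding $\widetilde{\gamma}_{k}^{\pi(m)} = O\big(P^{1/M}\big)$ for every relevant $k$; the last stream $\pi(M)$ sees only noise and likewise achieves the scaling $P^{1/M}$. The total power is dominated by $q_{\pi(1)}(P) = O(P)$, so the constraint in \eqref{Eq_Problem_R_Degraded} is met, and since all channel gains are $O(1)$ every user required to decode a given stream sees the same scaling. Hence $d_{\pi(m)} = \frac{1}{M}$ for all $m$, giving $\widetilde{d}^{\star} \geq \frac{1}{M}$.

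For the converse, I would exploit the structural fact already noted: in any order $\pi$, every receiver $k \in \mathcal{G}_{\pi(M)}$ decodes all $M$ streams. Fixing such a $k$, the group-rate constraint \eqref{Eq_group_rate_Degraded} gives $r_{\pi(m)} \leq \widetilde{R}_{k}^{\pi(m)}$ for every $m$, since $k$ lies in $\{\mathcal{G}_{\pi(m)},\ldots,\mathcal{G}_{\pi(M)}\}$ for all $m$. Summing and writing each $\widetilde{R}_{k}^{\pi(m)}$ as a ratio of partial power sums via \eqref{Eq_SINR_Degraded} produces a telescoping sum:
\begin{equation}
\sum_{m=1}^{M} r_{\pi(m)} \leq \log_{2}\!\left( 1 + \frac{\sum_{j=1}^{M} |\mathbf{h}_{k}^{\Hrm}\widetilde{\mathbf{p}}_{\pi(j)}|^{2}}{\sigma_{\mathrm{n}}^{2}} \right).
\end{equation}
By Cauchy--Schwarz and the power constraint, $\sum_{j=1}^{M} |\mathbf{h}_{k}^{\Hrm}\widetilde{\mathbf{p}}_{\pi(j)}|^{2} \leq \|\mathbf{h}_{k}\|^{2} P$, so the right-hand side equals $\log_{2}(P) + O(1)$. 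Dividing by $\log_{2}(P)$ and letting $P \to \infty$ yields $\sum_{m} d_{m} \leq 1$, and since max-min fairness forces $d_{m} \geq \widetilde{d}^{\star}$ for all $m$, we obtain $M\,\widetilde{d}^{\star} \leq 1$, i.e. $\widetilde{d}^{\star} \leq \frac{1}{M}$.

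The two bounds together give $\widetilde{d}^{\star} = \frac{1}{M}$. I expect the converse telescoping identity to be the crux of the argument: once one recognizes that a single receiver in $\mathcal{G}_{\pi(M)}$ must decode the entire superposition, and that the corresponding successive-decoding rates collapse to the logarithm of the total received power, the sum-DoF bound of unity follows immediately. The remaining achievability work is then a routine choice of the layered power exponents.
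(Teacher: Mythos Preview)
Your proposal is correct and follows essentially the same route as the paper: a layered power allocation $q_{\pi(m)}(P)=O\big(P^{(M-m+1)/M}\big)$ for achievability (the paper allows distinct random directions rather than a single common $\mathbf{w}$, but this is immaterial), and for the converse the same single-receiver MAC argument based on a user in $\mathcal{G}_{\pi(M)}$, the telescoping identity $\sum_{m}\widetilde{R}_{k}^{\pi(m)}=\log_{2}\!\big(1+\sum_{j}|\mathbf{h}_{k}^{\Hrm}\widetilde{\mathbf{p}}_{\pi(j)}|^{2}/\sigma_{\mathrm{n}}^{2}\big)$, and Cauchy--Schwarz. The only cosmetic difference is that the paper bounds $\min_{m}r_{\pi(m)}$ by the average $\tfrac{1}{M}\sum_{m}r_{\pi(m)}$ first and then telescopes, whereas you telescope the sum and divide by $M$ at the end.
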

\subsubsection{Proof of Proposition \ref{Proposition_DoF_Degraded}}
Consider a permutation function given by $\pi(m) = m$.
Next, consider a beamforming scheme where all directions $\widetilde{\mathbf{w}}_{1},\ldots,\widetilde{\mathbf{w}}_{M}$ are randomly chosen
from the space spanned by $\mathbf{H}$ and fixed, while the power allocation is set such that it satisfies the following scaling law
$q_{m}(P) = O\big(P^{(1+M-m)/M} \big)$ for all $m \in \mathcal{M}$.
The interference seen by the $m$th stream is dominated by the $m+1$th stream, which has the next highest power level amongst the remaining interferers.
By substituting the beamforming scheme into \eqref{Eq_SINR_Degraded}, it follows that  $\widetilde{\gamma}_{k}^{m}(P) = O\big(P^{1/M} \big)$.
Combining this with \eqref{Eq_group_rate_Degraded}, the MMF-DoF of $1/M$ is achieved.

For the upperbound, consider one user in the $\pi(M)$th group.
Such user decodes all $M$ streams, and the model reduces to a Multiple Access Channel (MAC) with a single-antenna receiver,
which has a sum-DoF of 1.
In particular, for any feasible solution  and $k \in \mathcal{G}_{\pi(M)}$, we observe that
\begin{subequations}
\label{Eq_converse_Degraded}
\begin{align}
\label{Eq_converse_Degraded_1}
 \underset{\pi(m) \in \mathcal{M}}{\min} r_{\pi(m)} & \leq \frac{1}{M}\sum_{m=1}^{M} r_{\pi(m)} \\
\label{Eq_converse_Degraded_2}
 & \leq  \frac{1}{M}\sum_{m=1}^{M} \widetilde{R}_{k}^{\pi(m)}  \\
\label{Eq_converse_Degraded_3}
 & = \frac{1}{M}\log_{2}\left( 1 + \frac{\sum_{m=1}^{M} |\mathbf{h}_{k}^{\Hrm} \widetilde{\mathbf{p}}_{\pi(m)} |^{2} }{\sigma_{\mathrm{n}}^{2}} \right) \\
\label{Eq_converse_Degraded_4}
 & \leq \frac{1}{M}\log_{2}\left( 1 + \frac{ P \|\mathbf{h}_{k}\|^{2}  }{\sigma_{\mathrm{n}}^{2}} \right)
\end{align}
\end{subequations}
where \eqref{Eq_converse_Degraded_1} follows from the fact that the minimum is upperbounded by the average, \eqref{Eq_converse_Degraded_2} follows from $k \in \mathcal{G}_{\pi(M)}$ and \eqref{Eq_group_rate_Degraded}, \eqref{Eq_converse_Degraded_3}
is obtained by substituting  \eqref{Eq_SINR_Degraded} into \eqref{Eq_converse_Degraded_2}, and \eqref{Eq_converse_Degraded_4} follows by applying the Cauchy-Schwarz inequality.
The upperbound in \eqref{Eq_converse_Degraded_4} scales as $\frac{1}{M}\log_{2}(P) + O(1)$.
\subsubsection{Insight}
By forcing a group of users to decode all messages, the achievable DoF performance is similar to that of a degraded channel with only one transmitting antenna\footnote{Note that beamforming gain is achieved by exploiting the multiple antennas. This may improve the achievable rate performance compared to a single-antenna transmitter, but no DoF gain is achieved.}.
Such transmission is able to split the single DoF amongst groups in fully-overloaded scenarios, hence avoiding the collapsing MMF-DoF of classical beamforming.
However, all spatial multiplexing gains achieved by classical beamforming when $N \geq N_{M-1}+G_{1}$ are sacrificed.
As we see in the following section, a better strategy is one that combines the benefits of both beamforming strategies.
Before we proceed, we propose a simplified degraded beamforming strategy.
\subsection{Single-Stream Degraded Beamforming}
The optimization problem in \eqref{Eq_Problem_R_Degraded} can be solved by finding the optimum beamformers for each possible decoding order, from which the optimum ordering and beamforming can be obtained.
However, finding the optimum multicasting beamformers (even for fixed ordering) is known to be a very difficult task, making the problem highly complicated.
Alternatively, the problem is made easier by imposing a level of suboptimality.
In particular, by restricting all degraded beamforming directions to a common direction $\mathbf{w}_{\mathrm{c}}$, the transmitted signal is expressed by
\begin{equation}
\label{Eq_x_Degraded_simplified}
\mathbf{x} = \mathbf{w}_{\mathrm{c}} \sum_{m=1}^{M} \sqrt{q_{m}} s_{m}
= \mathbf{p}_{\mathrm{c}} \sum_{m=1}^{M} \sqrt{\frac{q_{m}}{P}} s_{m}
\end{equation}
where $\mathbf{p}_{\mathrm{c}} = \sqrt{P}\mathbf{w}_{\mathrm{c}}$.
We further assume that each of the $K$ receivers decodes all $M$ streams.
This imposes the MAC upperbound in \eqref{Eq_converse_Degraded} on all receivers,
from which the upperbound on the MMF group-rate is tightened such that
\begin{equation}
\label{Eq_Degraded_simplified_upperbound}
\min_{\pi(m) \in \mathcal{M}} r_{\pi(m)}  \leq \frac{1}{M} \min_{k \in \mathcal{K}} \log_{2}\left( 1 + \frac{P|\mathbf{h}_{k}^{\Hrm} \mathbf{w}_{\mathrm{c}} |^{2} }{\sigma_{\mathrm{n}}^{2}} \right).
\end{equation}
It can be shown that the upperbound in
\eqref{Eq_Degraded_simplified_upperbound} is attainable as
the superposition of the $M$ streams $\sum_{m=1}^{M} \sqrt{\frac{q_{m}}{P}} s_{m}$ can be equivalently replaced by one super symbol stream which carries all $M$ messages, i.e. $W_{1},\ldots,W_{M} \mapsto s_{\mathrm{c}}$. This is transmitted along the beamformer $\mathbf{p}_{\mathrm{c}}$ in a single-group multicasting manner such that all $K$ receivers are able to decode it.
The upperbound in \eqref{Eq_Degraded_simplified_upperbound} corresponds to the equal splitting of the single-stream's rate amongst the $M$ groups.

From the above discussion, it follows that optimizing the simplified degraded beamforming scheme is equivalent to
packing all $K$ users in one group and solving the single-group multicast beamforming problem.
This single-stream problem is expressed by
\begin{equation}
\label{Eq_Problem_R_Degraded_SS}
\mathcal{R}_{\mathrm{c}}(P):
\begin{cases}
      \underset{\mathbf{p}_{\mathrm{c}} }{\max} & \frac{1}{M} \underset{k \in \mathcal{K}}{\min}
       \log_{2}\left( 1 + \frac{|\mathbf{h}_{k}^{\Hrm} \mathbf{p}_{\mathrm{c}} |^{2} }{\sigma_{\mathrm{n}}^{2}} \right)  \\
       \text{s.t.}  & \displaystyle{ \|\mathbf{p}_{\mathrm{c}}\|^{2} \leq P}
\end{cases}
\end{equation}
which can be solved using existing methods.
The next question that comes to mind is, what is the performance loss from adopting the single-stream strategy in \eqref{Eq_Problem_R_Degraded_SS}?
It can be seen that the single-stream strategy does not concede a DoF loss compared to the degraded beamforming strategy as
$\lim_{P \rightarrow \infty}  \frac{\mathcal{R}_{\mathrm{c}}(P)}{\log_{2}(P)} = \frac{1}{M}$.
Both strategies translate to the equal splitting of a single DoF amongst the $M$ groups.
However, the single DoF is accessed in a non-orthogonal manner in \eqref{Eq_Problem_R_Degraded}, while \eqref{Eq_Problem_R_Degraded_SS} is equivalent to Orthogonal Multiple Access (OMA).
This yields a rate gap at finite SNRs. The analysis of such gap is outside the scope of this work.
In the remainder of the paper, we restrict ourselves to the simplified degraded scheme.
\section{Rate-Splitted Beamforming}
\label{Section_RS}
As we saw in the previous sections, both the classical and the degraded strategies have their benefits
and limitations.
The former exploits the multiplexing gains offered by the antenna array yet fails in the overloaded regime,
while the latter guarantees a non-saturating MMF performance yet fails to utilize the DoF gains achieved through
spatial multiplexing. Here, we propose a RS beamforming strategy that is able to reap the fruits of both strategies.
In the RS strategy, each message is split into two parts: degraded and designated.
For example, $W_{m} \mapsto W_{m0},W_{m1} $ where $W_{m0}$ and $W_{m1}$ denote the degraded and designated parts respectively.
Degraded parts are encoded into degraded signals in the manner described in Section \ref{Section_Degraded_Beamforming}, while designated parts are encoded into designated signals as described in Section \ref{Section_NoRS_Problem}.
All signals are superposed and transmitted simultaneously.
To simplify the analysis, design and optimization, single-stream beamforming is employed to construct one degraded signal.
In particular, degraded parts are encoded into one super degraded symbol stream as
$W_{10},\ldots,W_{M0} \mapsto s_{\mathrm{c}}$.
On the other hand, designated parts are encoded into independent symbols streams as
$W_{11},\ldots,W_{M1} \mapsto s_{1},\ldots,s_{M}$.
The transmitted signal is then constructed as
\begin{equation}
\mathbf{x} = \mathbf{p}_{\mathrm{c}} s_{\mathrm{c}} + \sum_{m=1}^{M} \mathbf{p}_{m} s_{m}.
\end{equation}
Since the information intended to the $m$th group is contained partially in $s_{\mathrm{c}}$ and partially in $s_{m}$, both streams should be decoded by all receivers in  $\mathcal{G}_{m}$.
Hence, at the $k$th receiver, the degraded stream is first decoded by treating all designated streams as noise.
This is followed by removing the degraded part from the received signal using interference cancellation, before decoding the designated stream while treating all remaining streams as noise.
The $k$th receiver retrieves its message given that $s_{\mathrm{c}}$ and $s_{\mu(k)}$ are successfully decoded.
\subsection{Problem Formulation}
The degraded stream's SINR at the $k$th user is given by
\begin{equation}
\label{Eq_SINR_c_RS}
\gamma_{k}^{\mathrm{c}} =  \frac{|\mathbf{h}_{k}^{\Hrm} \mathbf{p}_{\mathrm{c}} |^{2}}
{\sum_{j =1}^{M} |\mathbf{h}_{k}^{\Hrm} \mathbf{p}_{j} |^{2} + \sigma_{\mathrm{n}}^{2}}.
\end{equation}
The corresponding achievable rate from the $k$th user's point of view is given by $R_{k}^{\mathrm{c}} = \log_{2}(1+\gamma_{k}^{\mathrm{c}})$.
For all users to be able to successfully decode the degraded stream, its transmission rate should be restricted to
$R^{\mathrm{c}} = \min_{k \in \mathcal{K}} R_{k}^{\mathrm{c}}$.
After cancelling the degraded signal, the SINRs and achievable rates of the designated schemes are as defined in Section \ref{Section_NoRS_Problem}.
By performing RS, each group-rate is composed of a sum of two portions corresponding to the degraded and designated parts of the message.
The degraded portion of the $m$th group-rate, contributed by the degraded stream, is given by
\begin{equation}
C_{m} = \frac{|W_{m0}|}{\sum_{j = 1}^{M}|W_{j0}|}R^{\mathrm{c}}
\end{equation}
where $|W|$ is the length of a message $W$. It naturally follows that $\sum_{m=1}^{M} C_{m} = R^{\mathrm{c}}$.
Note that $C_{m}$ corresponds to the rate at which $W_{m0}$ is communicated.
On the other hand, $W_{m1}$ is communicated at the designated rate given by $\min_{i \in \mathcal{G}_{m}} R_{i}$.
Hence, the $m$th group-rate is given by
\begin{equation}
r_{m} = C_{m} + \min_{i \in \mathcal{G}_{m}} R_{i}.
\end{equation}
This allows us to formulate the RS problem as
\begin{equation}
\label{Eq_Problem_R_RS}
\mathcal{R}_{\mathrm{RS}}(P):
\begin{cases}
       \underset{\mathbf{c},\mathbf{P}_{\mathrm{RS}}}{\max} & \underset{m \in \mathcal{M}}{\min}
       \Big( C_{m} + \underset{i \in \mathcal{G}_{m}}{\min} \ R_{i} \Big) \\
       \text{s.t.}
       &  R_{k}^{\mathrm{c}} \geq \sum_{m=1}^{M} C_{m} , \forall k \in \mathcal{K}   \\
       &   C_{m} \geq 0, \forall m \in \mathcal{M} \\
       & \displaystyle{ \|\mathbf{p}_{\mathrm{c}}\|^{2} +  \sum_{m=1}^{M} \|\mathbf{p}_{m}\|^{2} \leq P}
\end{cases}
\end{equation}
where $\mathbf{c} \triangleq ( C_{1},\ldots,C_{M} )$ and
$\mathbf{P}_{\mathrm{RS}} \triangleq [\mathbf{p}_{\mathrm{c}} \; \mathbf{p}_{1} \; \cdots \; \mathbf{p}_{M}]$.
The constraint $ R_{k}^{\mathrm{c}} \geq \sum_{m=1}^{M} C_{m}$ guarantees that the degraded stream is decoded by the $k$th user.
A solution to \eqref{Eq_Problem_R_RS} consists of the rates, the splitting ratios which can be deduced from the rates, and the beamforming vectors.
By inspecting the problem formulations in \eqref{Eq_Problem_R_NoRS}, \eqref{Eq_Problem_R_Degraded_SS} and \eqref{Eq_Problem_R_RS}, we obtain the relationship
\begin{equation}
\label{Eq_MMF_Rate_RS_Degraded_NoRS}
\max\big\{\mathcal{R}(P), \mathcal{R}_{\mathrm{c}}(P) \big\}
\leq
\mathcal{R}_{\mathrm{RS}}(P).
\end{equation}
This follows by observing that optimum solutions of problems \eqref{Eq_Problem_R_NoRS} and \eqref{Eq_Problem_R_Degraded_SS} correspond to feasible solutions of problem \eqref{Eq_Problem_R_RS} with $\|\mathbf{p}_{\mathrm{c}}\|^{2} = 0$ and
$\|\mathbf{p}_{1}\|^{2},\ldots,\|\mathbf{p}_{M}\|^{2} = 0$ respectively.
While the inequality \eqref{Eq_MMF_Rate_RS_Degraded_NoRS} confirms that the RS beamforming strategy cannot perform worse than the two preceding strategies, it does not quantify the performance improvement.
This can be partially settled by characterizing the MMF-DoF performance achieved by RS.
To facilitate such derivation, we introduce the following special case.
\subsection{A Special Case: Partitioned Beamforming}
Consider a strategy where groups are partitioned into two subsets, namely
$\mathcal{M}_{\mathrm{D}}\subseteq \mathcal{M}$ which are served using classical designated beamforming,  and
$\mathcal{M}_{\mathrm{c}} = \bar{\mathcal{M}}_{\mathrm{D}} = \mathcal{M} \setminus \mathcal{M}_{\mathrm{D}}$ served through degraded beamforming.
It is clear that this is a special case of the RS strategy achieved by splitting the messages such that $|W_{m0}| = 0$ for all $m \in \mathcal{M}_{\mathrm{D}}$, and $|W_{m1}| = 0$ for all $m \in \mathcal{M}_{\mathrm{c}}$.
This yields a design with $C_{m} = 0$ for all $m \in \mathcal{M}_{\mathrm{D}}$, and $\|\mathbf{p}_{m}\|^{2} = 0$ for all $m \in \mathcal{M}_{\mathrm{c}}$.
The degraded stream's SINR at the $k$th user is now given by
\begin{equation}
\label{Eq_SINR_c_Partitioned}
\gamma_{k}^{\mathrm{c}} =  \frac{|\mathbf{h}_{k}^{\Hrm} \mathbf{p}_{\mathrm{c}} |^{2}}
{\sum_{j \in \mathcal{M}_{\mathrm{D}}} |\mathbf{h}_{k}^{\Hrm} \mathbf{p}_{j} |^{2} + \sigma_{\mathrm{n}}^{2}}
\end{equation}
where the corresponding rate writes as $R_{k}^{\mathrm{c}} = \log_{2}(1+\gamma_{k}^{\mathrm{c}})$.
While $s_{\mathrm{c}}$ only carries messages intended to groups in $\mathcal{M}_{\mathrm{c}}$, it is still decoded by receivers in all groups as in the RS strategy to improve the decodability of designated streams.
Hence, the rate of the degraded stream is given by  $R^{\mathrm{c}} = \min_{k \in \mathcal{K}} R_{k}^{\mathrm{c}} $.
For the $k$th receiver where $k \in \left\{\mathcal{G}_{m} \mid m \in \mathcal{M}_{\mathrm{D}}\right\}$, the SINR of the designated stream is given by
\begin{equation}
\label{Eq_SINR_k_Partitioned}
\gamma_{k} =  \frac{|\mathbf{h}_{k}^{\Hrm} \mathbf{p}_{\mu(k)} |^{2}}
{\sum_{j \in \mathcal{M}_{\mathrm{D}}\setminus \mu(k)} |\mathbf{h}_{k}^{\Hrm} \mathbf{p}_{j} |^{2} + \sigma_{\mathrm{n}}^{2}}
\end{equation}
and the corresponding rate is given by $R_{k} = \log_{2}(1+\gamma_{k})$.
Achieving fairness in this case requires sharing $R^{\mathrm{c}}$ equally amongst groups in $\mathcal{M}_{\mathrm{c}}$.
It follows that the group-rates are given by
\begin{align}
\label{Eq_r_m_Partitioned}
r_{m} =
\begin{cases}
\frac{1}{|\mathcal{M}_{\mathrm{c}}|}\underset{k \in \mathcal{K}}{\min} \; R^{\mathrm{c}}_{k}, & \forall  m \in \mathcal{M}_{\mathrm{c}}\\
\min_{i \in \mathcal{G}_{m}}R_{i}, &  \forall m \in \mathcal{M}_{\mathrm{D}}.
\end{cases}
\end{align}
It is evident from \eqref{Eq_r_m_Partitioned} that the manner in which groups are partitioned has an influence on the achievable rate performance.
This is exploited in the DoF analysis presented next.
\subsection{DoF Analysis}
Now we are ready to derive the following result.
\begin{proposition}
\label{Proposition_DoF_RS}
\textnormal{
The MMF-DoF achieved by the RS beamforming strategy is given by
\begin{equation}
\label{Eq_max_min_DoF_RS}
d_{\mathrm{RS}}^{\star} \triangleq
\lim_{P \rightarrow \infty}  \frac{\mathcal{R}_{\mathrm{RS}}(P)}{\log_{2}(P)}= \frac{1}{1 + M - M_{\mathrm{D}}^{\star}}.
\end{equation}
where
\begin{align}
\label{Eq_M_D_star}
M_{\mathrm{D}}^{\star} \! = \!
\begin{cases}
       M, \;             N        \geq  N_{M} \\
       L, \;             N_{L} \!  \leq  \! N \! < \! N_{L+1}, \; \forall L \! \in \! \{ 1,\ldots,M-1 \}
\end{cases}
\end{align}
and $N_{L}$ is expressed in \eqref{Eq_N_L}.
}
\end{proposition}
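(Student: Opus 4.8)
The plan is to prove \eqref{Eq_max_min_DoF_RS} in the standard two parts, writing $L = M_{\mathrm{D}}^{\star}$ throughout so that $N_{L} \le N < N_{L+1}$ (with the convention $N_{M+1}=\infty$), and leaning on the partitioned-beamforming special case already set up in \eqref{Eq_SINR_c_Partitioned}--\eqref{Eq_r_m_Partitioned}.

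\emph{Achievability.} Since partitioned beamforming is a feasible special case of RS, I would simply exhibit a good partition. I take the $L$ smallest groups as the designated set, $\mathcal{M}_{\mathrm{D}} = \{1,\ldots,L\}$, leaving $\mathcal{M}_{\mathrm{c}} = \{L+1,\ldots,M\}$ with $|\mathcal{M}_{\mathrm{c}}| = M-L$. Because $N \ge N_{L}$, condition \eqref{Eq_min_Nt_max_DoF} restricted to the groups in $\mathcal{M}_{\mathrm{D}}$ holds, so by Assumption \ref{Assumption_Generic_Channel} the designated directions $\mathbf{w}_{1},\ldots,\mathbf{w}_{L}$ can be chosen to null all mutual interference among the designated groups, exactly as in \eqref{Eq_p_m_DoF_05} but over the reduced group set. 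I then allocate $q_{m}(P)=O\big(P^{1/(1+M-L)}\big)$ for $m\in\mathcal{M}_{\mathrm{D}}$ and $\|\mathbf{p}_{\mathrm{c}}\|^{2}=O(P)$, which respects the power budget. Each designated group then sees no designated interference and attains $d_{m}=\tfrac{1}{1+M-L}$; meanwhile every user (including those in $\mathcal{M}_{\mathrm{c}}$) sees designated interference of order $P^{1/(1+M-L)}$ in \eqref{Eq_SINR_c_Partitioned}, so the degraded stream carried at power $O(P)$ achieves $\gamma_{k}^{\mathrm{c}} = O\big(P^{(M-L)/(1+M-L)}\big)$ uniformly in $k$. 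Thus $R^{\mathrm{c}}$ has DoF $\tfrac{M-L}{1+M-L}$, and equal sharing over the $M-L$ groups of $\mathcal{M}_{\mathrm{c}}$ in \eqref{Eq_r_m_Partitioned} again gives $\tfrac{1}{1+M-L}$ each, so the symmetric DoF $\tfrac{1}{1+M-L}$ is achieved. Choosing the \emph{smallest} $L$ groups is what makes $\mathcal{M}_{\mathrm{D}}$ feasible for the largest admissible $L$, since serving any $L$ groups interference-free costs $1+\sum$ over all but the smallest of them, which is minimized by $\{1,\ldots,L\}$ and equals $N_{L}$ in \eqref{Eq_N_L}.

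\emph{Converse.} For the upper bound I would first record a per-group inequality. For the weakest user $k_{m}\in\mathcal{G}_{m}$, successively decoding $s_{\mathrm{c}}$ then $s_{m}$ makes the product $(1+\gamma_{k_{m}}^{\mathrm{c}})(1+\gamma_{k_{m}})$ telescope to $1+\frac{|\mathbf{h}_{k_{m}}^{\Hrm}\mathbf{p}_{\mathrm{c}}|^{2}+|\mathbf{h}_{k_{m}}^{\Hrm}\mathbf{p}_{m}|^{2}}{\sum_{j\ne m}|\mathbf{h}_{k_{m}}^{\Hrm}\mathbf{p}_{j}|^{2}+\sigma_{\mathrm{n}}^{2}}$, and Cauchy--Schwarz as in \eqref{Eq_converse_Degraded_4} gives $R_{k_{m}}^{\mathrm{c}}+R_{k_{m}}\le\log_{2}P+O(1)$. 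Writing $d_{c}$ for the DoF of $R^{\mathrm{c}}=\min_{k}R_{k}^{\mathrm{c}}$ and $d_{m}^{\mathrm{d}}$ for the DoF of $\min_{i}R_{i}$, and using $R^{\mathrm{c}}\le R_{k_{m}}^{\mathrm{c}}$, this yields the key MAC-type bound $d_{c}+d_{m}^{\mathrm{d}}\le 1$ for every $m$. I would then decompose $d_{m}=a_{m}+d_{m}^{\mathrm{d}}$, with $a_{m}$ the DoF of $C_{m}$, so that $\sum_{m}C_{m}=R^{\mathrm{c}}$ gives the splitting budget $\sum_{m}a_{m}\le d_{c}$. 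The third ingredient is a dimension bound: since $N<N_{L+1}$, Assumption \ref{Assumption_Generic_Channel} forbids serving any $L+1$ groups interference-free, so the attainable $(d_{1}^{\mathrm{d}},\ldots,d_{M}^{\mathrm{d}})$ is confined to the classical-beamforming DoF region of Section \ref{Section_NoRS_Problem} and of the converse of Proposition \ref{Proposition_DoF_NoRS}. Averaging $\min_{m}d_{m}\le\frac{1}{1+M-L}\sum_{m\in\mathcal{S}}d_{m}$ over the $1+M-L$ largest groups $\mathcal{S}=\{L,\ldots,M\}$ and optimizing the split $\{a_{m}\}$ against $d_{c}$ should then collapse the bound to $\tfrac{1}{1+M-L}$.

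\emph{Main obstacle.} The achievability and the inequality $d_{c}+d_{m}^{\mathrm{d}}\le 1$ are routine; the difficulty lies entirely in the converse's dimension step. The hard part will be converting $N<N_{L+1}$ into a usable constraint on the \emph{joint} designated-DoF tuple of the large groups, and then proving that shifting DoF from the degraded stream onto designated streams of the large groups can never lift the symmetric DoF above $\tfrac{1}{1+M-L}$. This is precisely where Assumption \ref{Assumption_Generic_Channel} and the converse of Proposition \ref{Proposition_DoF_NoRS} must be invoked with care, and where the optimality of assigning exactly the smallest $L$ groups to $\mathcal{M}_{\mathrm{D}}$ has to be established rather than assumed.
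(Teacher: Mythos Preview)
Your achievability is essentially the paper's: partitioned beamforming with $\mathcal{M}_{\mathrm{D}}=\{1,\ldots,M_{\mathrm{D}}^{\star}\}$, zero-forcing among designated groups, and the power split $\alpha=\tfrac{1}{1+M-M_{\mathrm{D}}^{\star}}$. Nothing to add there.

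The converse, however, diverges from the paper and the gap you flag is real and not closed by the ingredients you list. Your telescoping MAC bound $d_{c}+d_{m}^{\mathrm{d}}\le 1$ is correct, but averaging over the \emph{fixed} set $\mathcal{S}=\{L,\ldots,M\}$ requires $d_{c}+\sum_{m\in\mathcal{S}}d_{m}^{\mathrm{d}}\le 1$, and this does \emph{not} follow from the per-group inequalities, nor from ``the designated-DoF region of Proposition~\ref{Proposition_DoF_NoRS}'' (that proposition only pins down the symmetric point, not sum constraints over a prescribed subset). Concretely, several groups in $\{L,\ldots,M\}$ can simultaneously carry positive designated DoF at a common exponent $\alpha$, in which case your numerator can exceed $1$ unless you bring in the interference structure explicitly. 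The paper avoids this by proving a counting lemma: $\mathbf{p}_{1}$ must interfere with at least $M_{\mathrm{c}}^{\star}$ groups and every other $\mathbf{p}_{m}$ with at least $M_{\mathrm{c}}^{\star}-1$ groups. It then chooses a \emph{scheme-dependent} set $\mathcal{S}$: the $M_{\mathrm{c}}^{\star}-1$ groups hit by the maximum-exponent beam $\mathbf{p}_{\bar{m}}$ (each of which therefore has $d_{m}^{\mathrm{d}}=0$), together with group $1$ and one extra group $m_{1}\notin\mathcal{S}_{\bar{m}}$ that sees $\mathbf{p}_{1}$. This yields $d_{\mathrm{RS}}\le \frac{d_{c}+d_{1}+d_{m_{1}}}{M_{\mathrm{c}}^{\star}+1}\le \frac{(1-\bar{a})+a_{1}}{M_{\mathrm{c}}^{\star}+1}\le \frac{1}{M_{\mathrm{c}}^{\star}+1}$. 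The two indispensable ideas you are missing are (i) the interference-counting lemma and (ii) letting $\mathcal{S}$ depend on the scheme so that all but two of its designated DoFs are forced to zero; your fixed choice $\{L,\ldots,M\}$ cannot guarantee that.
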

The achievability of Proposition \ref{Proposition_DoF_RS} is based on partitioned beamforming as shown in what follows.
The converse on the other hand is relegated to Appendix \ref{Appendix_Converse_RS}.
Before we proceed, we highlight that $M_{\mathrm{D}}^{\star}$ in \eqref{Eq_M_D_star} is in fact
the maximum number of groups that can be served through interference-free designated beamforming (i.e. achieving a group-DoF of $1$ each) while silencing all the remaining groups.
This is shown as follows.
Assume that we wish to serve the subset of groups $\mathcal{L} = \{m_{1},\ldots,m_{L}\}$ using interference-free designated beamforming and disregard the remaining groups.
We further assume, without loss of generality, that $\mathcal{L}$ has an ascending order.
It follows from the discussion in Section \ref{Subsection_MMF_DoF_NoRS} that the minimum number of antennas required to do so is
$N_{\min}(\mathcal{L}) = 1+ \sum_{l = 2}^{L} G_{m_{l}}$.
It can be easily seen that
$N_{L} \leq N_{\min}(\mathcal{L})$  for all $\mathcal{L} \subseteq \mathcal{M} \ \text{and} \ |\mathcal{L}| = L$.
Hence for a fixed $L \in \mathcal{M}$, the subset of groups that requires the \emph{least} number of transmitting antennas is $\{1,\ldots,L\}$, i.e. the $L$ smallest groups.
Conversely, for a given number of antennas $N$, the maximum number of groups that can be served through interference-free designated beamforming is given by $M_{\mathrm{D}}^{\star}$.
This follows from $N_{M_{\mathrm{D}}^{\star}} \leq N < N_{M_{\mathrm{D}}^{\star}+1}$ for all $M_{\mathrm{D}}^{\star} \in \{1,\ldots,M-1\}$ and applying a contradiction argument.
\subsubsection{Achievability of Proposition \ref{Proposition_DoF_RS}}
\label{subsubsection_achievability_RS}
Consider a partitioned beamforming scheme where the subset $\mathcal{M}_{\mathrm{D}} = \{1,\ldots,M_{\mathrm{D}}^{\star}\}$ is served using designated beamforming while the remaining groups are served through degraded beamforming.
We observe that by disregarding groups in $\mathcal{M}_{\mathrm{c}}$, it is possible to carry out interference-free designated beamforming amongst all groups in $\mathcal{M}_{\mathrm{D}}$ which follows from $N \geq N_{M_{\mathrm{D}}^{\star}} $.
Hence, the designated beamforming directions are designed as
\begin{equation}
\mathbf{w}_{m} \in
\mathrm{null}\Big( \bar{\mathbf{H}}_{\{m,\mathcal{M}_{\mathrm{c}}\}}^{\Hrm}  \Big), \;  \forall  m \in \mathcal{M}_{\mathrm{D}}.
\end{equation}
On the other hand, $\mathbf{w}_{\mathrm{c}}$ is chosen randomly.
The power allocation is made to scale as
\begin{align}
q_{m}(P) & = O\big(P^{\alpha} \big), \; \forall  m \in \mathcal{M}_{\mathrm{D}} \\
q_{\mathrm{c}}(P) & = O(P)
\end{align}
where $\alpha \in [0,1]$ is some power partitioning factor.
By decoding the degraded stream while treating all other streams as noise, we have $\gamma_{k}^{\mathrm{c}}(P) = O(P^{1-\alpha})$ for all $k \in \mathcal{K}$, which follows from  \eqref{Eq_SINR_c_Partitioned}.
Hence, the degraded super symbol achieves a DoF of $1-\alpha$.
This DoF is divided equally amongst the degraded groups, from which the group-DoF is given by
\begin{equation}
d_{m} = \frac{1-\alpha}{M - M_{\mathrm{D}}^{\star}}, \; \forall m \in \mathcal{M}_{\mathrm{c}}.
\end{equation}
after removing the degraded stream from the received signal, it can be seen from \eqref{Eq_SINR_k_Partitioned} that
$\gamma_{k}(P) = O(P^{\alpha})$ for all $k \in \{\mathcal{G}_{m} \mid m \in \mathcal{M}_{\mathrm{D}}\}$.
Hence, we have $d_{m} = \alpha$ for all $m \in \mathcal{M}_{\mathrm{D}}$.
By setting $\alpha = \frac{1}{1 + M - M_{\mathrm{D}}^{\star}}$, the MMF-DoF in \eqref{Eq_max_min_DoF_RS} is achieved.
The fact that the proposed partitioned scheme can be realized by a corresponding RS scheme completes the achievability.
The above scheme can be viewed as a signal-space partitioning scheme \cite{Yuan2016}.
In particular, the signal-space is divided such that
the bottom power levels (up to $\alpha$) are reserved for interference-free designated beamforming, while the remaining top power levels (from $\alpha$ to $1$) are used for degraded beamforming.
The degraded beam carries a total DoF of $1-\alpha$ as it sees interference from the bottom $\alpha$ power levels, while each designated beam carries a DoF of $\alpha$.
Since the degraded DoF ($1-\alpha$) gets divided by $|\mathcal{M}_{\mathrm{c}}|$ while the designated DoF
($\alpha$) is multiplied by $|\mathcal{M}_{\mathrm{D}}|$, it is natural to divide groups such that $|\mathcal{M}_{\mathrm{D}}|$ is maximized, which in turn minimizes $|\mathcal{M}_{\mathrm{c}}|$.
This is achieved by the proposed group partitioning.
\subsubsection{Insight}
\label{subsubsection_RS_insight}
To gain insight into the MMF-DoF in Proposition \ref{Proposition_DoF_RS}, the result is described as
\begin{align}
\label{Eq_max_min_DoF_RS_2}
d_{\mathrm{RS}}^{\star} =
\begin{cases}
       1,             & N        \geq  N_{M} \\
       \frac{1}{2},   & N_{M-1}  \leq N < N_{M} \\
       \frac{1}{3},   & N_{M-2}  \leq N < N_{M-1} \\
       \vdots         & \vdots   \\
       \frac{1}{M-1}, & N_{2}    \leq N < N_{3} \\
        \frac{1}{M},  & 1        \leq N < N_{2}
\end{cases}
\end{align}
which is obtained by substituting \eqref{Eq_M_D_star} into \eqref{Eq_max_min_DoF_RS}.
By comparing \eqref{Eq_max_min_DoF_RS_2} to \eqref{Eq_max_min_DoF_NoRS} and \eqref{Eq_max_min_DoF_Degraded}, we see that in addition to combining the advantages of the designated and degraded strategies, the RS strategy surpasses both in some cases.
For example, consider $N = N_{M-1}$.
The first observation in Section \ref{Subsubsection_achievability_proposition_NoRS} holds, whilst the second does not.
Hence, the interference caused by the $M$th stream cannot be nulled through designated beamforming which limits the MMF-DoF to $d^{\star} = 0$.
However, $d_{\mathrm{RS}}^{\star} = 0.5$ is achieved by transmitting the $M$th stream in a degraded manner and removing it from the received signals through interference cancellation.
This is also strictly greater than $\widetilde{d}^{\star} = \frac{1}{M}$ for $M > 2$, due to the multiplexing gain of designated beams.
As we saw from the achievability of Proposition \ref{Proposition_DoF_RS}, the RS strategy's optimum MMF-DoF is achieved through partitioned beamforming where no splitting is necessary.
While this holds in the DoF sense, it is not necessarily the case when considering the achievable rates at finite SNRs.
This is confirmed in the simulation results, where it is shown that message splitting is in fact beneficial at finite SNRs.
From a problem-solving perspective, we observe that the RS formulation in \eqref{Eq_Problem_R_RS} avoids the joint optimization of the beamforming matrix and group assignment in partitioned beamforming.
Alternatively, the beamforming matrix and rate allocations are obtained by solving the problem in \eqref{Eq_Problem_R_RS} as we see in the following section.
\section{Optimization}
\label{Section_Optimization}
While \eqref{Eq_Problem_R_NoRS} and \eqref{Eq_Problem_R_Degraded_SS} can be formulated in terms of the SINRs and solved using existing methods, e.g. \cite{Sidiropoulos2006,Karipidis2008,Schad2012}, this cannot be applied to the RS problem in \eqref{Eq_Problem_R_RS} as the performance of each user cannot be represented by a single SINR expression.
As seen from \eqref{Eq_Problem_R_RS}, each achievable user-rate (and ultimately group-rate) is in fact expressed as a sum-rate.
For this reason, we resort to the WMMSE approach  \cite{Christensen2008,Shi2011}, which is particularly effective in dealing with problems incorporating non-convex coupled sum-rate expressions, including RS problems \cite{Joudeh2016b,Joudeh2016a}.
\subsection{Rate-WMMSE Relationship}
We start by establishing the Rate-WMMSE relationship, around which the WMMSE algorithm is based.
First, let us express the $k$th user's average received power as
\begin{equation}
\label{Eq_T_c_k}
T_{\mathrm{c},k} = |\mathbf{h}_{k}^{\Hrm}\mathbf{p}_{\mathrm{c}}|^{2} +
\underbrace{|\mathbf{h}_{k}^{\Hrm}\mathbf{p}_{\mu(k)}|^{2} + \overbrace{\sum_{m\neq \mu(k)} |\mathbf{h}_{k}^{\Hrm}\mathbf{p}_{m}|^{2} + \sigma_{\mathrm{n}}^{2}}^{I_{k}} }_{T_{k} = I_{\mathrm{c},k}}.
\end{equation}
Next, we define the MSEs.
The $k$th user's estimate of $s_{\mathrm{c}}$, denoted by $\widehat{s}_{\mathrm{c},k}$, is obtained by applying the  equalizer $g_{\mathrm{c},k}$ to the receive signal such that $\widehat{s}_{\mathrm{c},k} = g_{\mathrm{c},k}y_{k}$.
After removing the common stream from the received signal, the equalizer $g_{k}$ is applied to the remaining signal to obtain an estimate of $s_{k}$ given by $\widehat{s}_{k}=g_{k}(y_{k}-\mathbf{h}_{k}^{\Hrm}\mathbf{p}_{\mathrm{c}}s_{\mathrm{c}})$.
The common and private MSEs at the output of the $k$th receiver, defined as $\varepsilon_{\mathrm{c},k} \triangleq \E\{|\widehat{s}_{\mathrm{c},k} - s_{\mathrm{c}}|^{2}\}$ and $\varepsilon_{k} \triangleq \E\{|\widehat{s}_{k} - s_{k}|^{2}\}$ respectively, are given by
\begin{equation}
\label{Eq_MSE}
\begin{split}
  \varepsilon_{\mathrm{c},k}  & = |g_{\mathrm{c},k}|^{2} T_{\mathrm{c},k} -2\Re \big\{g_{\mathrm{c},k}\mathbf{h}_{k}^{\Hrm}\mathbf{p}_{\mathrm{c}}\big\}+1
\quad \text{and} \\
\varepsilon_{k}  & =  |g_{k}|^{2} T_{k}-2\Re \big\{g_{k}\mathbf{h}_{k}^{\Hrm}\mathbf{p}_{\mu}\big\}+1.
\end{split}
\end{equation}
By optimizing the MSEs, the MMSEs are obtained as
$\varepsilon_{\mathrm{c},k}^{\mathrm{MMSE}}   \triangleq \min_{g_{\mathrm{c},k}} \varepsilon_{\mathrm{c},k} =
T_{\mathrm{c},k}^{-1} I_{\mathrm{c},k} $ and
$\varepsilon_{k}^{\mathrm{MMSE}}  \triangleq \min_{g_{k}} \varepsilon_{k} = T_{k}^{-1}I_{k}$
where the corresponding optimum equalizers are the well-known MMSE weights given by
$g_{\mathrm{c},k}^{\mathrm{MMSE}} = \mathbf{p}_{\mathrm{c}}^{\Hrm}\mathbf{h}_{k} T_{\mathrm{c},k}^{-1}$
and
$g_{k}^{\mathrm{MMSE}} = \mathbf{p}_{k}^{\Hrm}\mathbf{h}_{k}T_{k}^{-1}$.
The MMSEs are related to the SINRs such that
$\gamma_{k}^{\mathrm{c}} = \big( 1/\varepsilon_{\mathrm{c},k}^{\mathrm{MMSE}} \big) - 1 $
and
$\gamma_{k} = \big( 1/\varepsilon_{k}^{\mathrm{MMSE}} \big) - 1 $,
from which the achievable rates write as
$R_{k}^{\mathrm{c}} = -\log_{2}(\varepsilon_{\mathrm{c},k}^{\mathrm{MMSE}})$ and
$R_{k} = -\log_{2}(\varepsilon_{k}^{\mathrm{MMSE}})$.
Next we introduce the main building blocks of the solution, the augmented WMSEs defined for the $k$th user as:
\begin{equation}
\label{Eq_A_WMSEs}
\xi_{\mathrm{c},k}
 \triangleq
u_{\mathrm{c},k} \varepsilon_{\mathrm{c},k}  -  \log_{2} (  u_{\mathrm{c},k} )
\ \ \ \text{and} \ \ \
\xi_{k}
 \triangleq
u_{k} \varepsilon_{k}  -  \log_{2}  (  u_{k} )
\end{equation}
where $u_{\mathrm{c},k}, u_{k} > 0$ are the corresponding weights.
In the following, $\xi_{\mathrm{c},k}$ and $\xi_{k}$ are referred to as the WMSEs for brevity.
The Rate-WMMSE relationship is established by optimizing \eqref{Eq_A_WMSEs} over the equalizers and weights such that:
\begin{equation}
\label{Eq_min_WMSE}
\begin{split}
 \xi_{\mathrm{c},k}^{\mathrm{MMSE}}  & \triangleq  \underset{u_{\mathrm{c},k}, g_{\mathrm{c},k}}{\min} \xi_{\mathrm{c},k} = 1-R_{k}^{\mathrm{c}}
\quad \text{and} \\
 \xi_{k}^{\mathrm{MMSE}} &  \triangleq \underset{u_{k}, g_{k}}{\min} \ \xi_{k}= 1-R_{k}
\end{split}
\end{equation}
where the optimum equalizers are given by: $g_{\mathrm{c},k}^{\star}  = g_{\mathrm{c},k}^{\mathrm{MMSE}} $ and $g_{k}^{\star} = g_{k}^{\mathrm{MMSE}}$, and the optimum weights are given by:
$u_{\mathrm{c},k}^{\star} = u_{\mathrm{c},k}^{\mathrm{MMSE}} \triangleq \big( \varepsilon_{\mathrm{c},k}^{\mathrm{MMSE}} \big)^{-1}$
and
$u_{k}^{\star} = u_{k}^{\mathrm{MMSE}} \triangleq \big( \varepsilon_{k}^{\mathrm{MMSE}} \big)^{-1}$.
This is obtained by checking the first order optimality conditions.
By closely examining each WMSE, it can be seen that it is convex in each variable while fixing the other two.
\subsection{WMSE Reformulation and Algorithm}
Motivated by \eqref{Eq_min_WMSE}, an equivalent WMSE reformulation of problem \eqref{Eq_Problem_R_RS} writes as
\begin{equation}
\label{Eq_Problem_WMSE_RS}
\widehat{\mathcal{R}}_{\mathrm{RS}}(P):
\begin{cases}
       \underset{r_{\mathrm{g}},\mathbf{r}, \mathbf{c},\mathbf{P}_{\mathrm{RS}},\mathbf{g},\mathbf{u}}{\max} \ \  r_{\mathrm{g}} \\
        \text{s.t.} \quad
         C_{m} + r_{m} \geq  r_{\mathrm{g}}, \forall m \in \mathcal{M}  \\
        \quad \quad 1-\xi_{i} \geq r_{m}, \forall i \in \mathcal{G}_{m}, \forall m \in \mathcal{M} \\
        \quad \quad 1-\xi_{\mathrm{c},k} \geq \sum_{m=1}^{M} C_{m} , \forall k \in \mathcal{K}   \\
        \quad \quad  C_{m} \geq 0, \forall m \in \mathcal{M} \\
        \quad \quad \displaystyle{ \|\mathbf{p}_{\mathrm{c}}\|^{2} +  \sum_{m=1}^{M} \|\mathbf{p}_{m}\|^{2} \leq P}
\end{cases}
\end{equation}
where $r_{\mathrm{g}}$ and $\mathbf{r} \triangleq (r_{1},\ldots,r_{M})$  are auxiliary variables,
$\mathbf{u} \triangleq \left( u_{\mathrm{c},k},u_{k} \mid k \in \mathcal{K} \right)$ is is the set of weights, and $\mathbf{g} \triangleq \left( g_{\mathrm{c},k},g_{k}\mid k \in \mathcal{K} \right)$ is the set of equalizers.
The WMSE problem in \eqref{Eq_Problem_WMSE_RS} is solved using an Alternating Optimization (AO) algorithm.
In a given iteration of the algorithm, $\mathbf{g}$ and $\mathbf{u}$ are firstly updated using the optimum MMSE solution of \eqref{Eq_min_WMSE}.
Next, $\mathbf{P}_{\mathrm{RS}}$ alongside all auxiliary variables in \eqref{Eq_Problem_WMSE_RS}
are updated by solving \eqref{Eq_Problem_WMSE_RS} for fixed $\mathbf{g}$ and $\mathbf{u}$.
This is a convex optimization problem which can be solved using interior-point methods \cite{Boyd2004}.

Each iteration of the algorithm increases the objective function, which is bounded above for a given power constraint, until convergence.
The global optimality of the limit point cannot be guaranteed  due to non-convexity.
However, the stationarity (KKT optimality) of the solution can be established using the arguments in \cite{Razaviyayn2014}.
\section{Numerical Results and Analysis}
\label{Section_Numerical_Results}
In this section, we compare the performances of the different beamforming strategies.
We consider i.i.d channels with entries drawn from $\mathcal{CN}(0,1)$ and all performances are obtained by averaging over $100$ realizations.
We start by comparing the MMF rates achieved from 1) designated beamforming obtained by solving  \eqref{Eq_Problem_R_NoRS}, 2) single-stream degraded beamforming obtained by solving \eqref{Eq_Problem_R_Degraded_SS}, and 3) RS beamforming obtained by solving  \eqref{Eq_Problem_R_RS}.
Note that for problems \eqref{Eq_Problem_R_NoRS} and \eqref{Eq_Problem_R_Degraded_SS}, we plot the SDR upper-bounds (no randomization), hence presenting optimistic performances for the designated and degraded beamforming strategies.
On the other hand, the RS results are obtained by solving \eqref{Eq_Problem_WMSE_RS}, hence representing actual
achievable rates. All convex optimization problems are solved using the CVX toolbox \cite{Grant2008}.
In Fig. \ref{Fig_simulations_1}, we consider the same system  presented in Fig. \ref{Fig_NoRS_example}
with $K=6$ users divided over $M=3$ groups such that $G_{1} = 1$, $G_{2} = 2$ and $G_{3} = 3$.
The number of antennas is varied as $N = 2,4$ and $6$.
For $N=6$, it follows from Proposition \ref{Proposition_DoF_NoRS} and Proposition \ref{Proposition_DoF_RS} that full DoF is achieved by both the designated and RS beamforming schemes.
This is clear in Fig. \ref{Fig_simulations_1} where the two performances are almost identical, while degraded beamforming exhibits a DoF loss.
For $N = 4$, the system goes into the partially-overloaded regime and both schemes (designated and RS) achieve a MMF-DoF of $0.5$.
However, RS achieves a marginally improved MMF-rate performance.
This is due to the fact that the RS strategy offers more flexibility through the degraded stream, particularly when dealing with the interference caused by the largest group as explained in Section \ref{subsubsection_RS_insight}.
When $N$ drops to $2$, the system becomes fully-overloaded causing the MMF-rate of designated beamforming to saturate.
On the other hand, both the degraded and RS schemes achieve the same MMF-DoF of $1/3$,
with RS exhibiting gains in terms of the MMF-rate.
This is due to the designated streams in RS which provide a flavour of the scheme in \eqref{Eq_Problem_R_Degraded}, hence achieving an asymptotically constant gap with the scheme in \eqref{Eq_Problem_R_Degraded_SS}.
In Fig. \ref{Fig_simulations_2} we focus on fully-overloaded scenarios.
We consider a fixed number of antennas $N=4$ and a varied number of groups, i.e. $M = 3$ and $4$ groups with $2$ users per group.
It follows from the three propositions that the MMF-DoFs of the designated, degraded and RS schemes are $0$, $1/3$ and $1/2$ respectively for $M=3$,
and $0$, $1/4$ and $1/3$ respectively for $M=4$.
Such performances are exhibited in Fig. \ref{Fig_simulations_2}, where the pronounced gains achieved by the RS strategy in such overloaded scenarios are evident.
\begin{figure}
\centering
\includegraphics[width = 0.45\textwidth]{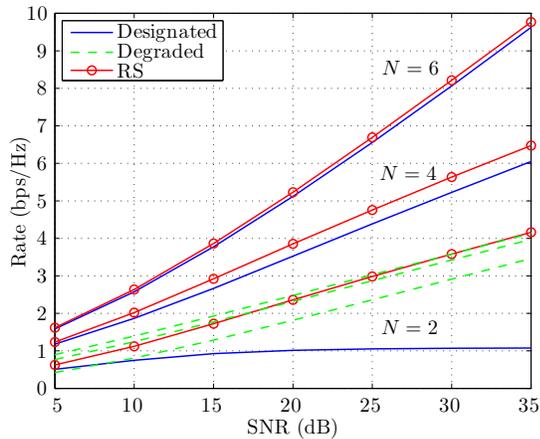}
\caption{MMF rate performances. $K=6$ users, $M=3$ groups, $G_{1},G_{2},G_{3}=1,2,3$ users, and $N = 2,4,6$ antennas.}
\label{Fig_simulations_1}
\end{figure}
\begin{figure}
\centering
\includegraphics[width = 0.45\textwidth]{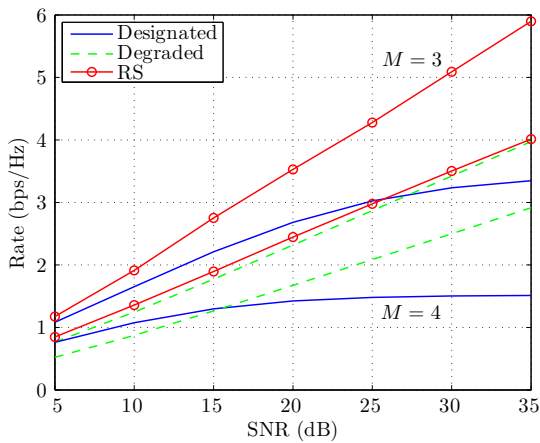}
\caption{MMF rate performances. $N = 4$ antennas and $M=3$ and $4$ groups of $2$ users each.}
\label{Fig_simulations_2}
\end{figure}
Next, we look at the contributions of the designated and degraded parts in RS.
We consider the scenario in Fig. \ref{Fig_simulations_1} with $N=4$, and show the different contributions in Fig. \ref{Fig_simulations_3}.
As all groups achieve symmetric rates, the two contributions are inversely proportional for any given group.
According to the achievability scheme in Section \ref{subsubsection_achievability_RS}, it is sufficient from a DoF perspective to perform partitioned beamforming where groups $1$ and $2$ are served through designated beamforming, and group $3$ is served through degraded beamforming.
However, suppressing the designated part of group 3 is not necessarily optimum from a rate perspective.
In particular, Fig. \ref{Fig_simulations_3} shows that while the degraded parts of groups $1$  and $2$ have relatively small contributions, the designated part of group $3$ contributes significantly.
This follows by observing that designated beam of group $3$ can be useful by placing it in the null space of the two other groups.
\begin{figure}
\centering
\includegraphics[width = 0.45\textwidth]{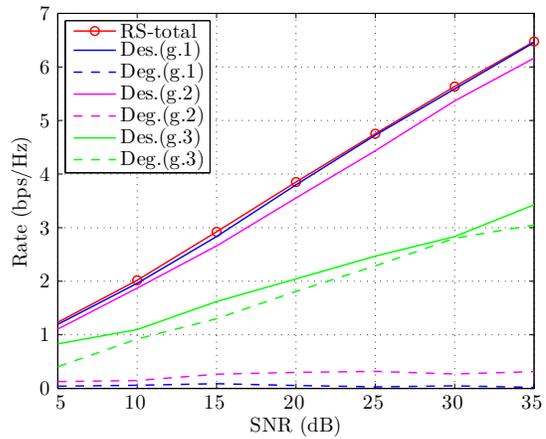}
\caption{RS rate contributions. $K=6$ users, $M=3$ groups, $G_{1},G_{2},G_{3} = 1,2,3$ users, and $N = 4$ antennas.}
\label{Fig_simulations_3}
\end{figure}
\section{Conclusion and Future Work}
\label{Section_Conclusion}
This paper considered the problem of MMF transmit beamforming in overloaded multigroup multicasting scenarios.
The limitations of the classical beamforming strategy in such scenarios were identified through DoF analysis.
Two alternative strategies have been proposed, namely the degraded strategy and the RS strategy.
From a DoF perspective, the RS strategy was shown to combine the benefits of the other two strategies, and surpass them both in some scenarios.
Simulation results showed that RS exhibits strictly higher MMF-rates in partially and fully overloaded scenarios.
This work focused more on proposing the strategies and analysing their DoF performances, and less on the design and optimization.
Indeed, the RS scheme can be further improved by incorporating the SIC structure in \eqref{Eq_Problem_R_Degraded} for the degraded part.
This poses extra optimization challenges due to the increased number of beamformers and the group ordering problem.
However, certain structures may be useful in ordering the groups such as the different group sizes for example.
Another interesting extension is the incorporation of imperfect Channel State Information at the Transmitter (CSIT) in the design and analysis.
This brings non-trivial challenges in terms of characterizing the DoF performance and also the robust optimization problem, as seen in \cite{Joudeh2016b,Joudeh2016a} for unicast beamforming.
However under imperfect CSIT, RS is expected to bring gains even when the system is not overloaded.
\appendix
\subsection{Converse of Proposition \ref{Proposition_DoF_NoRS}}
\label{Appendix_Converse_NoRS}
Recall that a beamforming scheme is given by $\{\mathbf{P}(P)\}_{P}$ with one beamforming matrix for each power level.
For the $P$th level, the beamforming matrix is expressed by
\begin{equation}
\label{Eq_P_NoRS_converse}
\mathbf{P}(P) = \big[q_{1}(P)\mathbf{w}_{1}(P) \; \cdots \; q_{M}(P)\mathbf{w}_{M}(P) \big].
\end{equation}
In contrast to \eqref{Eq_P_NoRS_achievability}, the definition in \eqref{Eq_P_NoRS_converse} is more general as it allows the beamforming directions to change with $P$.
We define the $m$th power exponent as $a_{m} (P) = \frac{\log \big( q_{m} (P) \big)}{\log(P)}$,
where it is clear that $a_{m} (P) \in ( - \infty,1]$.
For the sake of analysis, the exponent is restricted to $[0,1]$ with no influence on the DoF result.
Moreover, we assume that the following limits are well-defined
\begin{equation}
\nonumber
\lim_{P \rightarrow \infty} a_{m} (P) = a_{m} \ \text{and} \ \lim_{P \rightarrow \infty} \mathbf{w}_{m}(P) = \mathbf{w}_{m}, \;
\forall m \in \mathcal{M}
\end{equation}
for all beamforming schemes\footnote{If such limits do not exist, then the limits in \eqref{Eq_user_DoF} and \eqref{Eq_group_DoF} may not exist. This is avoided by defining the DoF in \eqref{Eq_user_DoF} using the $\limsup$ which is guaranteed to exist due to \eqref{Eq_MMF_DoF_upperbound_multicast}. In turn, $a_{m}$ and $\mathbf{w}_{m}$ are taken as the values that achieve this limit superior, which are also guaranteed to exists as the sets of $a_{m} (P)$ and $\mathbf{w}_{m}(P)$ are compact, and from the extreme value theorem.}.
Note that $a_{m} = 0$ corresponds to $q_{m}(P) = O(1)$.
The maximum scaling factor amongst groups is denoted by $\bar{a} \triangleq \max_{m \in \mathcal{M}}a_{m}$.
Next we derive an upperbound for the achievable group-DoF.
We observe that for any $k \in \mathcal{K}$, $j \in \mathcal{M}$ and $P$, we have
$0 \leq |\mathbf{h}_{k}^{\Hrm}\mathbf{w}_{j}(P)|^{2} \leq \|\mathbf{h}_{k}^{\Hrm} \|^{2}$.
Such inner products characterize the inter-group interference nulling capabilities of different schemes.
For a given scheme $\{\mathbf{P}(P)\}_{P}$, we say that the $j$th beamformer interferes with the $m$th group (asymptotically) if
$|\mathbf{h}_{i}^{\Hrm}\mathbf{w}_{j}|^{2}  > 0$ for some $i \in \mathcal{G}_{m}$.
Let $\mathcal{I}_{m}$ be the subset of groups with beamformers interfering with the $m$th group, and let $\bar{a}_{m} \triangleq \max_{j \in \mathcal{I}_{m}}a_{j}$ be  the exponent of the dominant interferer.
Note that $\bar{a}_{m}=0$ for $\mathcal{I}_{m} = \emptyset$.
Moreover, there exist at least one
$i \in \mathcal{G}_{m}$ with SINR scaling as $ \gamma_{i}(P) = O\big( P^{a_{m} - \bar{a}_{m} }\big)$.
Recalling the DoF definition in \eqref{Eq_group_DoF}, we can write
\begin{equation}
\label{Eq_d_m_upper_bound_multicast}
d_{m} \leq \big( a_{m} - \bar{a}_{m}  \big)^{+}
\end{equation}
where $(\cdot)^{+}$ follows from the fact that the DoF is non-negative. We recall that for a given beamforming scheme, the achievable MMF-DoF satisfies $d \leq d_{m}$ for all $m \in \mathcal{M}$.
Now we argue that $d \leq d^{\star}$  for any feasible beamforming scheme.
$d \leq 1$ follows from \eqref{Eq_MMF_DoF_upperbound_multicast}. Hence, we focus on the two other cases.
For $d^{\star} = 0.5$, it is sufficient to show that $d \leq 0.5$ for $N = N_{M}-1$, as further decreasing $N$ cannot increase the DoF.
Since $N < N_{M}$, at least one group sees interference from $\mathbf{p}_{1}$ for any scheme.
Let $\mathcal{G}_{m_{1}}$ be a group that sees such interference, i.e. $1 \in \mathcal{I}_{m_{1}}$.
We may assume that $a_{m_{1}} >  \bar{a}_{m_{1}}$, as the contrary will limit the MMF-DoF to $0$ as seen from \eqref{Eq_d_m_upper_bound_multicast}.
Next, we write the following set of inequalities
\begin{align}
\label{Eq_d_multicast_UB_1}
d &\leq \frac{d_{1} + d_{m_{1}} }{2}\\
\label{Eq_d_multicast_UB_2}
& \leq \frac{ a_{1} + a_{m_{1}} - \bar{a}_{m_{1}} }{2} \\
\label{Eq_d_multicast_UB_3}
& \leq \frac{ a_{1} + a_{m_{1}} - a_{1} }{2} \leq 0.5.
\end{align}
\eqref{Eq_d_multicast_UB_1} follows by taking the average of two group-DoF as an upperbound for the minimum.
\eqref{Eq_d_multicast_UB_2} follows from \eqref{Eq_d_m_upper_bound_multicast}, while \eqref{Eq_d_multicast_UB_3} follows from $1 \in \mathcal{I}_{m_{1}}$ and  $a_{m_{1}}  \leq 1$.
Next we show that $d \leq 0$ for $N = N_{M-1}+G_{1}-1$.
Note that $N < 1 + K - G_{m}$ for all $m \in \mathcal{M}$.
Hence, each beamformer causes interference to at least one group.
Equivalently, we have $\bigcup_{m \in \mathcal{M}}\mathcal{I}_{m} = \mathcal{M}$.
For any power allocation with exponents $a_{1},\ldots,a_{M}$,
there exists at least one group that sees interference from $\mathbf{p}_{m_{1}}$, where $a_{m_{1}} = \bar{a}$.
Let the index of such group be $m_{2}$.
We have $d \leq d_{m_{2}} \leq  \big( a_{m_{2}} - \bar{a} \big)^{+}  = 0$, which completes the proof.
\subsection{Converse of Proposition \ref{Proposition_DoF_RS}}
\label{Appendix_Converse_RS}
For the RS strategy, we recall that a beamforming scheme is given by
$\big\{ \mathbf{P}_{\mathrm{RS}}(P) \big\}_{P}$, where
\begin{equation}
\label{Eq_P_RS_converse}
\mathbf{P}_{\mathrm{RS}}(P) = \big[q_{\mathrm{c}}(P)\mathbf{w}_{\mathrm{c}}(P) \ \  \mathbf{P}(P)\big]
\end{equation}
where $\mathbf{P}(P)$ is as described in the previous subsection.
On the other hand, we have $q_{\mathrm{c}}(P) = O\big(P_{\mathrm{t}}^{a_{\mathrm{c}}} \big)$,
with $a_{\mathrm{c}} \in [0,1]$ as the corresponding power scaling factor.
The DoF achieved by the degraded stream is defined as
$d_{\mathrm{c}} \triangleq \lim_{P \rightarrow \infty} \frac{R_{\mathrm{c}}(P) }{\log_{2}(P)}$.
As the degraded stream is decoded by all receivers while treating designated streams as noise,
the degraded  DoF is upperbounded by
\begin{equation}
\label{Eq_d_c_upper_bound_multicast}
d_{\mathrm{c}} \leq \big( a_{\mathrm{c}} - \bar{a} \big)^{+} \leq 1 - \bar{a}
\end{equation}
which is limited by the maximum power scaling across all designated beams.
The fraction of $d_{\mathrm{c}}$ allocated to the $m$th group is denoted by $c_{m}$, where $\sum_{m=1}^{M}c_{m} = d_{\mathrm{c}}$.
Hence, the $m$th group-DoF is given by $c_{m} + d_{m}$, consisting of a common part and a designated part.
The  MMF-DoF achieved by a given scheme satisfies $d_{\mathrm{RS}} \leq c_{m}+d_{m}$ for all $m \in \mathcal{M}$.
We recall that  the maximum number of groups that can be served with interference-free beamforming is denoted by $M_{\mathrm{D}}^{\star}$, which is expressed in \eqref{Eq_M_D_star}.
Hence, for any feasible precoding scheme, at least $M_{\mathrm{c}}^{\star} = M - M_{\mathrm{D}}^{\star}$ groups receive non-zero interference from the designated beams.
In this proof, we show that $d_{\mathrm{RS}} \leq \frac{1}{1+M_{\mathrm{c}}^{\star}}$.
Before we proceed, we present the following result which plays an important role in the proof.
\begin{lemma}
\label{Lemma_bound_groups_multicast}
\textnormal{
For any designated beamforming matrix $\mathbf{P}$,  $\mathbf{p}_{1}$ interferes with at least
$M_{\mathrm{c}}^{\star}$ groups.
Moreover, each of $\mathbf{p}_{2},\ldots,\mathbf{p}_{M}$ interfere with at least $M_{\mathrm{c}}^{\star}-1$ groups.}
\end{lemma}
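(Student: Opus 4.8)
The plan is to translate the combinatorial ``interference'' statement into a dimension-counting statement about null spaces, and then invoke the characterization of $M_{\mathrm{D}}^{\star}$ as the size of the largest subset of groups that admits interference-free designated beamforming. For a fixed designated matrix $\mathbf{P}$ with directions $\mathbf{w}_{1},\ldots,\mathbf{w}_{M}$, I would fix $j \in \mathcal{M}$ and let $\mathcal{S}_{j} \subseteq \mathcal{M} \setminus \{j\}$ be the set of groups that $\mathbf{p}_{j}$ does \emph{not} interfere with. By the definition of interference, $m \in \mathcal{S}_{j}$ exactly when $\mathbf{h}_{i}^{\Hrm}\mathbf{w}_{j} = 0$ for every $i \in \mathcal{G}_{m}$, i.e. $\mathbf{w}_{j} \in \mathrm{null}(\mathbf{H}_{m}^{\Hrm})$; hence $\mathbf{w}_{j} \in \mathrm{null}(\mathbf{H}_{\mathcal{S}_{j}}^{\Hrm})$. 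Since the number of groups that $\mathbf{p}_{j}$ interferes with equals $(M-1) - |\mathcal{S}_{j}|$, both claims reduce to upper bounds on $|\mathcal{S}_{j}|$: I need $|\mathcal{S}_{1}| \leq M_{\mathrm{D}}^{\star} - 1$ for the first beamformer and $|\mathcal{S}_{j}| \leq M_{\mathrm{D}}^{\star}$ for $j \geq 2$.

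The key observation is that $\mathbf{w}_{j}$ is a nonzero unit vector, so $\mathrm{null}(\mathbf{H}_{\mathcal{S}_{j}}^{\Hrm})$ must have dimension at least one. By \eqref{Eq_null_dimension} this forces $N \geq 1 + \sum_{m \in \mathcal{S}_{j}} G_{m}$. I would then recall, from the discussion preceding Proposition \ref{Proposition_DoF_RS}, that a subset $\mathcal{L}$ can be served by interference-free designated beamforming precisely when $N \geq N_{\min}(\mathcal{L}) = 1 + \sum_{m \in \mathcal{L}} G_{m} - \min_{m \in \mathcal{L}} G_{m}$, and that any servable $\mathcal{L}$ obeys $|\mathcal{L}| \leq M_{\mathrm{D}}^{\star}$, because $N_{|\mathcal{L}|} \leq N_{\min}(\mathcal{L})$ and $M_{\mathrm{D}}^{\star}$ is the largest $L$ with $N \geq N_{L}$.

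For general $j$ I would apply this to $\mathcal{L} = \mathcal{S}_{j}$ itself: dropping its smallest group only relaxes the antenna requirement, so $N_{\min}(\mathcal{S}_{j}) \leq 1 + \sum_{m \in \mathcal{S}_{j}} G_{m} \leq N$, whence $\mathcal{S}_{j}$ is servable and $|\mathcal{S}_{j}| \leq M_{\mathrm{D}}^{\star}$, yielding the bound $M_{\mathrm{c}}^{\star} - 1$ for each of $\mathbf{p}_{2},\ldots,\mathbf{p}_{M}$. For $j = 1$ I would instead apply it to $\mathcal{L} = \{1\} \cup \mathcal{S}_{1}$: because group $1$ is the globally smallest group by the ordering \eqref{Eq_Group_order}, it is the minimizer in $N_{\min}(\{1\} \cup \mathcal{S}_{1})$ and is therefore absorbed ``for free,'' i.e. $N_{\min}(\{1\} \cup \mathcal{S}_{1}) = 1 + \sum_{m \in \mathcal{S}_{1}} G_{m} \leq N$. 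Hence $\{1\} \cup \mathcal{S}_{1}$ is servable, $|\mathcal{S}_{1}| + 1 \leq M_{\mathrm{D}}^{\star}$, and $\mathbf{p}_{1}$ interferes with at least $M_{\mathrm{c}}^{\star}$ groups.

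The main subtlety, and the very reason $\mathbf{p}_{1}$ enjoys the stronger bound, is this asymmetry: only for the globally smallest group can one adjoin the index to its non-interfered set without raising the antenna requirement, so the cardinality bound tightens by one. For $j \geq 2$ no such free adjunction is guaranteed, which is why I argue about $\mathcal{S}_{j}$ alone and concede one unit. The degenerate cases $M_{\mathrm{c}}^{\star} \in \{0,1\}$ render the asserted bounds trivial, so they need no separate treatment.
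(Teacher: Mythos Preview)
Your proof is correct and rests on the same null-space dimension count as the paper's: if $\mathbf{p}_{j}$ avoids interfering with the groups in $\mathcal{S}_{j}$, then a nonzero $\mathbf{w}_{j}$ lies in $\mathrm{null}(\mathbf{H}_{\mathcal{S}_{j}}^{\Hrm})$, which under Assumption~\ref{Assumption_Generic_Channel} forces $N \geq 1 + \sum_{m \in \mathcal{S}_{j}} G_{m}$. The difference is in how the cardinality bound is extracted from this inequality. The paper writes $N = N_{M_{\mathrm{D}}^{\star}} + \bar{N}$ with $0 \leq \bar{N} < G_{M_{\mathrm{D}}^{\star}+1}$ and then performs a three-way case split on the position of $m$ (namely $m=1$, $2 \leq m \leq M_{\mathrm{D}}^{\star}$, and $m > M_{\mathrm{D}}^{\star}$), checking explicit inequalities in each case. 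You instead invoke the servability characterization of $M_{\mathrm{D}}^{\star}$ directly: since $N_{\min}(\mathcal{S}_{j}) \leq 1 + \sum_{m \in \mathcal{S}_{j}} G_{m} \leq N$, the set $\mathcal{S}_{j}$ is servable and hence $|\mathcal{S}_{j}| \leq M_{\mathrm{D}}^{\star}$; for $j=1$ the global minimality of $G_{1}$ lets you adjoin $\{1\}$ for free and tighten by one. Your route is more uniform and avoids the case analysis entirely, while the paper's explicit decomposition makes the slack $\bar{N}$ visible (which is occasionally informative but not needed for the lemma itself). Both arguments are equivalent in strength; yours is the cleaner packaging.
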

\begin{proof}
From the discussion in Section \ref{Subsection_MMF_DoF_NoRS},
it follows that placing $\mathbf{p}_{m}$ in the null space of all groups in the subset $\mathcal{L}_{m} \subseteq \mathcal{M} \setminus m$ requires that
\begin{equation}
\label{Eq_N_m_condition}
N \geq 1 + \sum_{j \in \mathcal{L}_{m} } G_{j}.
\end{equation}
Hence, finding the minimum number of groups $\mathbf{p}_{m}$ interferes with is equivalent to finding $\mathcal{L}_{m}$ with the maximum $|\mathcal{L}_{m}|$ such that \eqref{Eq_N_m_condition} is satisfied.
First, we observe that for fixed $|\mathcal{L}_{m}| = L$, the subset of groups requiring the least number of antennas to satisfy \eqref{Eq_N_m_condition} is given by
\begin{align}
\label{Eq_L_m}
\mathcal{L}_{m}^{\star}(L) =
\begin{cases}
 \{1,\ldots,m-1,m+1,\ldots,L+1\},  m      \leq    L \\
 \{1,\ldots,L\},   m     >   L
\end{cases}
\end{align}
which follows from \eqref{Eq_Group_order} and \eqref{Eq_N_m_condition}.
Hence, having $N < 1 + \sum_{j \in \mathcal{L}_{m}^{\star}(L) } G_{j}$
implies that $\mathbf{p}_{m}$ cannot be placed in the null space of any subset of $L$ groups,
Equivalently, $\mathbf{p}_{m}$ interferes with at least $M - L - 1$ groups (by excluding $m$).
To characterize this condition for all $m \in \mathcal{M}$, we write
\begin{equation}
\label{Eq_Nt_Mms}
N = N_{M_{\mathrm{D}}^{\star}} + \bar{N} = 1+ \sum_{j = 2}^{M_{\mathrm{D}}^{\star}} G_{j} + \bar{N}
\end{equation}
where $0 \leq \bar{N} < G_{M_{\mathrm{D}}^{\star} + 1}$.
This follows directly from \eqref{Eq_M_D_star}.
Now looking at $m=1$, the corresponding beamformer can be placed in the null space of at most $M_{\mathrm{D}}^{\star} - 1$ groups, i.e. groups
$\{2,\ldots,M_{\mathrm{D}}^{\star}\}$.
This follows by observing that
\begin{equation}
\label{Eq_N_m_1}
1 + \sum_{j = 2}^{M_{\mathrm{D}}^{\star}}  G_{j} \leq N
<  1 + \sum_{j = 2}^{M_{\mathrm{D}}^{\star}+1}  G_{j}.
\end{equation}
From the right-most term in \eqref{Eq_N_m_1}, we can see that $\mathbf{p}_{1}$ causes interference to at least $M_{\mathrm{c}}^{\star}$ groups.
Next, we consider the groups $m \in \{2,\ldots,M_{\mathrm{D}}^{\star} \}$.
We can write
\begin{equation}
\label{Eq_Nt_inequality_appendix}
N  = 1 + \sum_{j = 1,j\neq m}^{M_{\mathrm{D}}^{\star}}  G_{j} + (G_{m}- G_{1}) + \bar{N}
< 1 + \sum_{j = 1,j\neq m}^{M_{\mathrm{D}}^{\star} + 2}  G_{j}
\end{equation}
where \eqref{Eq_Nt_inequality_appendix} follows from $\bar{N} < G_{M_{\mathrm{D}}^{\star}+1}$ and
$G_{m}- G_{1} < G_{M_{\mathrm{D}}^{\star}+2}$.
Hence, in the best case scenario, $\mathbf{p}_{m}$ is placed in the null space of groups
$\{1,\ldots,m-1,m+1,\ldots,M_{\mathrm{D}}^{\star}+1\}$,
and causes interference to the remaining $M_{\mathrm{c}}^{\star} - 1 $.
Finally, we consider $m \in \{M_{\mathrm{D}}^{\star}+1,\ldots,M \}$.
Here, the best scenario occurs when $\bar{N} \geq G_{1}$, from which we can write
\begin{equation}
1 + \sum_{j = 1}^{M_{\mathrm{D}}^{\star}}  G_{j} \leq N <  1 + \sum_{j = 2}^{M_{\mathrm{D}}^{\star}+1}  G_{j}.
\end{equation}
It can be seen that in this case also, $\mathbf{p}_{m}$ causes interference to at least $M_{\mathrm{c}}^{\star} - 1 $ groups.
\end{proof}
Since the minimum group-DoF is upper-bounded by the average of any subset of group-DoFs, by taking the subset $\mathcal{S} \subseteq \mathcal{M}$, we can write
\begin{equation}
d_{\mathrm{RS}} \leq \frac{\sum_{m \in \mathcal{S}} c_{m} + d_{m}}{|\mathcal{S}|}  \leq
\frac{d_{\mathrm{c}} + \sum_{m \in \mathcal{S}} d_{m}}{|\mathcal{S}|}.
\end{equation}
where the right-hand side inequality follows from the fact that
$\sum_{m \in \mathcal{S}} c_{m} \leq \sum_{m \in \mathcal{M}}c_{m} = d_{\mathrm{c}}$.
Now, we need to find the \emph{right} subset $\mathcal{S}$ which gives us a meaningful upper-bound in closed-form, that applies to any feasible precoding scheme.
Let $\bar{m} \in \mathcal{M}$ be the index of the group with the largest power scaling, i.e. $a_{\bar{m}} = \bar{a}$.
Moreover, let $\mathcal{S}_{\bar{m}} \subseteq \mathcal{M} \setminus \bar{m}$ be the set of groups that see interference from $\mathbf{p}_{\bar{m}}$.
From Lemma \ref{Lemma_bound_groups_multicast}, we know that $|\mathcal{S}_{\bar{m}}| \geq M_{\mathrm{c}}^{\star} - 1$.
For the upper-bound, we assume that $|\mathcal{S}_{\bar{m}}| = M_{\mathrm{c}}^{\star} - 1$, as increasing the number of groups that see interference does not increase the DoF.
We also assume that $\bar{m} \neq 1$, as the contrary does not influence the result as we see later.
Since $\mathbf{p}_{1}$ interferes with at least $M_{\mathrm{c}}^{\star}$ groups (from Lemma \ref{Lemma_bound_groups_multicast}), we have at least one group that sees interference from $\mathbf{p}_{1}$ and is not in $\mathcal{S}_{\bar{m}}$. Let the index of such group be $m_{1}$.
From \eqref{Eq_d_m_upper_bound_multicast}, note that $a_{1} \geq a_{m_{1}}$ implies $d_{1} + d_{m_{1}} \leq a_{1}$, while
$a_{1} \leq a_{m_{1}}$ implies $d_{1} + d_{m_{1}} \leq a_{m_{1}}$.
We assume, without loss of generality, that $a_{1} \geq a_{m_{1}}$, as the contrary does not affect the result.
The set of groups for the upper-bound is taken as $\mathcal{S} = \{1,m_{1}, \mathcal{S}_{\bar{m}}\}$ with $|\mathcal{S}| = M_{\mathrm{c}}^{\star}+1$.
Now, we can write
\begin{align}
\label{Eq_d_RS_ub_multicast_1}
d_{\mathrm{RS}} & \leq \frac{ d_{\mathrm{c}} +  d_{1} +d_{m_{1}} + \sum_{m \in \mathcal{S}_{\bar{m}}} d_{m}}{M_{\mathrm{c}}^{\star}+1} \\
\label{Eq_d_RS_ub_multicast_2}
& \leq \frac{ 1 -  \bar{a} + a_{1}  }{M_{\mathrm{c}}^{\star}+1} \leq \frac{1 }{M_{\mathrm{c}}^{\star}+1}
\end{align}
where the left-hand side inequality in \eqref{Eq_d_RS_ub_multicast_2} follows from the fact that $d_{m} = 0$ for all $m \in \mathcal{S}_{\bar{m}}$ and \eqref{Eq_d_c_upper_bound_multicast}, and the right-hand side inequality follows from $\bar{a} \geq a_{1}$.
Note that if we assume that $\bar{m} = 1$, then we can also assume that $|\mathcal{S}_{\bar{m}}| = M_{\mathrm{c}}^{\star}$. As a result, the same upper-bound holds by adding $m_{1}$ to $\mathcal{S}_{\bar{m}}$ and setting $d_{m_{1}} = 0$.
This completes the proof.
\ifCLASSOPTIONcaptionsoff
  \newpage
\fi
\footnotesize
\bibliographystyle{IEEEtran}
\bibliography{References}


\end{document}